\documentclass[aps,pra,twocolumn,nofootinbib,superscriptaddress,floatfix]{revtex4-2}
\usepackage[T1]{fontenc}
\usepackage{lmodern,microtype}
\usepackage{amsmath,amssymb,amsfonts,amsthm,mathtools,bm}
\usepackage{booktabs,array,enumitem,graphicx}
\usepackage[hidelinks]{hyperref}
\usepackage[nameinlink,capitalize]{cleveref}
\graphicspath{{figures/}}
\hypersetup{
 pdftitle={Fixed-boost Wigner noise: strict trace-distance contraction without quantum degradability},
 pdfauthor={Maxim V. Churilov}
}

\newtheorem{theorem}{Theorem}[section]

\newtheorem{proposition}[theorem]{Proposition}
\newtheorem{corollary}[theorem]{Corollary}

\newtheorem{remark}[theorem]{Remark}

\newcommand{\CPTP}{\mathsf{CPTP}}
\newcommand{\Tr}{\operatorname{Tr}}
\newcommand{\diag}{\operatorname{diag}}
\newcommand{\conv}{\operatorname{conv}}
\newcommand{\smin}{s_{\min}}
\newcommand{\id}{\mathrm{id}}
\newcommand{\TV}{d_{\mathrm{TV}}}

\begin{document}
\title{Fixed-Boost Wigner Noise: Strict Trace-Distance Contraction without Quantum Degradability}
\author{Maxim V. Churilov}
\email{churilovm1305@gmail.com}
\affiliation{Independent Researcher, Orenburg, Russia}
\date{November 24, 2025}

\begin{abstract}
A Lorentz boost acts on the canonical spin of a massive particle through a momentum-dependent Wigner rotation.  We show that, for one fixed observer boost, reducing over an uncertain momentum can strictly contract every pairwise spin-state trace distance without producing a channel that is degradable from the less contracted one.  For spin $1/2$, we first characterize the exact inversion-symmetric channel cone generated by a fixed Wigner angle and transverse momentum directions.  Inside this cone lies the Pauli family
\begin{equation*}
 M_\alpha=\diag(1-\alpha,1-\alpha,1-2\alpha),
 \qquad 0\leq\alpha<\tfrac12.
\end{equation*}
For $0<\alpha<\beta<1/2$, all trace distances between distinct spin states are strictly smaller after $M_\beta$ than after $M_\alpha$, yet the unique linear post-processing factor has a negative normalized Choi eigenvalue.  We solve the optimization over all physical converters exactly:
\begin{equation*}
 \frac12\inf_{\Lambda\in\CPTP}
 \|\Phi_\beta-\Lambda\Phi_\alpha\|_\diamond
 =\frac{\alpha(\beta-\alpha)}{2-3\alpha},
\end{equation*}
whereas the reverse deficiency is $\beta-\alpha$.  Thus the identity dominates the family, while all positive-noise members are pairwise incomparable under CPTP post-processing.  The ideal construction is realized as the narrow-packet limit of pure, normalizable five-component momentum states, and explicit perturbation and finite-shot tomography bounds certify an open set of examples.  Separately, every nonidentity member fails embedding in a time-homogeneous Pauli-diagonal Lindblad semigroup.  Hence ordering all unassisted spin distinguishabilities does not determine the quantum statistical post-processing order.
\end{abstract}

\maketitle
\section{Introduction}

For a massive particle, the unitary representation of a Lorentz transformation $\Lambda$ acts as
\begin{equation}
 U(\Lambda)|p,\sigma\rangle
 =\sum_{\sigma'}D_{\sigma'\sigma}\!\left(W(\Lambda,p)\right)
 |\Lambda p,\sigma'\rangle,
 \label{eq:wigner-action}
\end{equation}
where $W(\Lambda,p)$ is the Wigner rotation.  If momentum is not resolved, a spin state prepared independently of momentum evolves by a random-unitary channel.  In Bloch coordinates,
\begin{equation}
 \bm r\longmapsto M\bm r,
 \qquad
 M=\int R_W(\Lambda,p)\,\mu(dp),
 \label{eq:bloch-average}
\end{equation}
with $R_W\in SO(3)$.

Common-axis models reduce \cref{eq:bloch-average} to dephasing and can be ordered by one complex coherence parameter.  A fixed boost with a genuinely angular momentum distribution is different: Wigner axes vary across the transverse plane, branch rotations do not commute, and complete positivity couples the three observed contractions.  The central question is operational.  If a target preparation yields a smaller spin contrast than a source preparation along every calibrated axis under the same boost, must there exist a state-independent channel $\Lambda$ such that
\begin{equation}
 \Phi_B=\Lambda\circ\Phi_A?
 \label{eq:degradability-question}
\end{equation}
We prove that the answer is no, already for one fixed boost and a five-branch momentum model, and that the separation persists for pure normalizable wave packets.

The distinction matters because componentwise damping is often used as a proxy for information loss.  Quantum post-processing is stronger: the same map must act consistently on the whole Bloch ball and when the spin is entangled with an arbitrary ancilla.  A factor can contract every coordinate and still fail complete positivity.

Our analysis has three parts.  First, we isolate a compatibility domain on which the reduced spin dynamics is an honest channel.  Second, we characterize the complete convex cone generated by inversion-symmetric transverse momenta at a fixed Wigner angle.  Third, inside that cone we construct an analytic family for which all contrasts decrease but the unique factor violates one Fujiwara--Algoet facet.  Unlike a mixture of different observer boosts, the construction changes only the momentum law of the same boosted observer.

\subsection{Relation to prior work and claim boundary}

Observer dependence of reduced spin states, spin--momentum entanglement, and Wigner rotations are well established \cite{Peres2002,GingrichAdami2002,Terno2004,Caban2005,Jordan2004}.  Their operational interpretation depends on the chosen spin observable and measurement model, a point emphasized in Refs.~\cite{SaldanhaVedral2012,SaldanhaVedral2013}; throughout this work we use canonical spin and the explicit product-preparation channel defined below.  The geometry of unital qubit channels and the Fujiwara--Algoet inequalities is standard \cite{FujiwaraAlgoet1999,Ruskai2002,Braun2014}.  Quantum comparison and channel divisibility identify CPTP post-processing, rather than scalar damping, as the relevant information order \cite{Shmaya2005,Buscemi2012,WolfCirac2008}.  Exact discrimination of Pauli mixtures supplies the norm identity used below \cite{Sacchi2005,Watrous2018}.  Wigner rotation also has concrete experimental proposals for massive spin-$1/2$ particles \cite{Palge2025}.

The general factor criterion below is an application of known qubit-channel geometry.  The new result is the fixed-boost realization: an exact reachable cone and an open family of physically generated channel pairs for which all principal spin contrasts decrease while degradability fails.  We do not claim a characterization of every channel produced by arbitrary spin--momentum correlations or every Lorentz wave packet.

\paragraph{Status of the principal ingredients.}
The factor test uses standard affine qubit-channel geometry.  The central new results are the exact fixed-boost cone, the one-observer contraction counterexample with analytic Choi margin, and its persistence for normalizable wave packets.  The tomography bound is a finite-error implementation of that no-go.

\section{Compatibility domain and reduced spin channels}

Let $\mathcal H_P=L^2(\mathcal H_m^+,d\nu_m)$ be the one-particle momentum space on the positive-energy mass shell with invariant measure $d\nu_m$.  We assume an autonomous product preparation
\begin{equation}
 \rho_{SP}=\rho_S\otimes\rho_P,
 \qquad
 \rho_P\ge0,\quad \Tr\rho_P=1,
 \label{eq:compatibility}
\end{equation}
where $\rho_P$ is a genuine trace-class operator.  For a pure packet,
\begin{align}
 \rho_P&=|\psi\rangle\!\langle\psi|,
 &\psi&\in L^2(\mathcal H_m^+,d\nu_m),\nonumber\\
 \int|\psi(p)|^2d\nu_m(p)&=1.
 \label{eq:pure-packet}
\end{align}
After the Lorentz transformation and momentum trace, the off-diagonal momentum kernel disappears because $p\mapsto\Lambda p$ is one-to-one.  The spin map is
\begin{equation}
 \Phi_{\Lambda,\mu}(\rho_S)
 =\int D(W(\Lambda,p))\rho_SD(W(\Lambda,p))^\dagger\,\mu(dp).
 \label{eq:reduced-channel}
\end{equation}
Here $\mu(dp)=|\psi(p)|^2d\nu_m(p)$ for \cref{eq:pure-packet}; for a mixed trace-class state it is the corresponding momentum probability measure.  Thus \cref{eq:reduced-channel} is a Bochner integral of unitary conjugations and is unital and CPTP on the entire spin state space.  Initial spin--momentum correlations lie outside this autonomous compatibility domain and are not assigned a reduced spin channel here.  This restriction also avoids attributing operational meaning to a reduced spin state independently of the specified preparation and canonical-spin measurement model.

The formal expression $\int|p\rangle\!\langle p|\,\mu(dp)$ is deliberately avoided.  For a nonatomic momentum law it represents a multiplication operator rather than a trace-class density operator.  The wave-packet formulation above produces the same diagonal probability law after momentum trace without leaving the Hilbert-space state space.

Write a qubit state as $\rho=(I+\bm r\cdot\bm\sigma)/2$.  Then \cref{eq:reduced-channel} has Bloch matrix \cref{eq:bloch-average}.  Convex averages of Wigner rotations lie in $\conv(SO(3))$, which is precisely the Bloch body of mixed-unitary qubit channels.

\section{Exact convertibility for invertible reduced spin channels}

\begin{theorem}[Unique factor criterion]
\label{thm:factor}
Let $\Phi_A,\Phi_B$ be unital qubit channels with Bloch matrices $M_A,M_B$, and assume $M_A$ is invertible.  A channel $\Lambda$ satisfies $\Phi_B=\Lambda\Phi_A$ if and only if the unital map with Bloch matrix
\begin{equation}
 T=M_BM_A^{-1}
 \label{eq:factor}
\end{equation}
is completely positive.  The factor is unique.
\end{theorem}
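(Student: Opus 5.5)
The plan is to work entirely in the affine Bloch representation. Any linear, trace-preserving, Hermiticity-preserving map $\Lambda$ on $2\times2$ matrices has the form $(I+\bm r\cdot\bm\sigma)/2\mapsto(I+(N\bm r+\bm t)\cdot\bm\sigma)/2$, with a real $3\times3$ matrix $N$ and a translation $\bm t$. Composition of such maps is $(N,\bm t)\circ(M,0)=(NM,\bm t)$. Since $\Phi_A$ and $\Phi_B$ are unital, their affine data are $(M_A,0)$ and $(M_B,0)$. The equation $\Phi_B=\Lambda\Phi_A$ therefore reads $NM_A\bm r+\bm t=M_B\bm r$ for every $\bm r$ in the Bloch ball.

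The first step is to evaluate this at $\bm r=0$, the maximally mixed state, which is in the domain. This gives $\bm t=0$, so any factor is unital. Because the Bloch ball spans $\mathbb R^3$, linearity then gives $NM_A=M_B$ as matrices. Invertibility of $M_A$ forces $N=M_BM_A^{-1}=T$. So there is at most one trace-preserving linear factor, which is the uniqueness claim.

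The second step proves the equivalence.

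\emph{Necessity.} If a channel $\Lambda$ exists, the first step identifies it with the unital map of Bloch matrix $T$, and that map is completely positive because $\Lambda$ is a channel.

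\emph{Sufficiency.} If the unital map $\Lambda_T$ with Bloch matrix $T$ is completely positive, it is CPTP: trace preservation is built into the Bloch form, since the identity component is fixed. The composition rule then gives $\Lambda_T\Phi_A=(TM_A,0)=(M_B,0)=\Phi_B$.

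The only point needing care is the first step's claim that $\Lambda$ is determined by its action on states. This holds because states span all $2\times2$ matrices, so a linear map is fixed by its values on density matrices. I do not expect a real obstacle. The substantive work is deciding when $T$ is completely positive, for example via the Fujiwara--Algoet conditions after a singular-value decomposition, but that is deferred to later results and is not part of this theorem.
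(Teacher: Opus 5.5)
Your proposal is correct and follows the same route as the paper: write the candidate converter in affine Bloch form, use the maximally mixed input to force $\bm t=0$, and use invertibility of $M_A$ to pin down $T=M_BM_A^{-1}$, after which complete positivity of that unique unital map is necessary and sufficient. Your extra remarks, that density matrices span all $2\times2$ matrices and that a Bloch-form map fixing the identity component is automatically trace preserving, make explicit steps the paper leaves implicit.
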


\begin{proof}
A general qubit channel acts affinely as $\bm x\mapsto T\bm x+\bm t$.  Composition with $\Phi_A$ gives $M_B\bm r=TM_A\bm r+\bm t$ for every $\bm r$.  The maximally mixed input forces $\bm t=0$, and invertibility gives \cref{eq:factor}.  Complete positivity is then necessary and sufficient.
\end{proof}

Every real $3\times3$ matrix has a signed singular-value decomposition
\begin{equation}
 T=R_1\diag(\eta_1,\eta_2,\eta_3)R_2,
 \qquad R_1,R_2\in SO(3).
\end{equation}
The rotations correspond to unitary conjugations.  The factor is CPTP exactly when
\begin{align}
 1+\eta_3&\ge |\eta_1+\eta_2|,\label{eq:FA1}\\
 1-\eta_3&\ge |\eta_1-\eta_2|.
 \label{eq:FA2}
\end{align}
Equivalently, the four normalized Choi eigenvalues
\begin{align}
 p_0&=\tfrac14(1+\eta_1+\eta_2+\eta_3),\nonumber\\
 p_1&=\tfrac14(1+\eta_1-\eta_2-\eta_3),\nonumber\\
 p_2&=\tfrac14(1-\eta_1+\eta_2-\eta_3),\nonumber\\
 p_3&=\tfrac14(1-\eta_1-\eta_2+\eta_3)
 \label{eq:choi-eigs}
\end{align}
are nonnegative.  Coordinatewise bounds $|\eta_i|\le1$ do not imply these coupled inequalities.

\section{The inversion-symmetric cone of one fixed boost}

Fix an observer boost of rapidity $\xi$ in the $z$ direction.  Take sharp particle momenta with a common rapidity $\zeta$ and directions in the transverse plane.  For perpendicular canonical boosts, the Wigner angle satisfies
\begin{equation}
 \tan\frac\theta2
 =\tanh\frac\xi2\tanh\frac\zeta2,
 \qquad 0<\theta<\frac\pi2,
 \label{eq:wigner-angle}
\end{equation}
and its axis is parallel to $\hat{\bm z}\times\hat{\bm p}$.  Opposite momenta give opposite axes and hence rotations $R_{\bm n}(\pm\theta)$.

Let $s=1-\cos\theta$.  The symmetric pair average is
\begin{equation}
 \frac{R_{\bm n}(\theta)+R_{\bm n}(-\theta)}2
 =\cos\theta\,I+s\,\bm n\bm n^T.
 \label{eq:pair-average}
\end{equation}
For transverse $\bm n$, the $z$ direction is an eigenvector.

\begin{theorem}[Exact fixed-boost symmetric cone]
\label{thm:cone}
A real matrix $M$ is generated by the identity branch and an inversion-symmetric probability distribution of fixed-angle Wigner rotations whose axes lie in the plane orthogonal to the fixed boost if and only if, in coordinates with boost axis $z$,
\begin{equation}
 M=
 \begin{pmatrix}
 I_2-D&0\\
 0&1-\Tr D
 \end{pmatrix},
 \qquad
 D\succeq0,
 \quad
 \Tr D\le s.
 \label{eq:fixed-cone}
\end{equation}
Every such $M$ is realized by at most two opposite-axis pairs and the identity.
\end{theorem}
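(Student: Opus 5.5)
The plan is to prove both directions by computing the Bloch matrix of a general admissible mixture directly from \cref{eq:pair-average}, and then, conversely, to realize any target $M$ through a spectral decomposition of $D$.

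\emph{Necessity.} An admissible law puts weight $w_0\ge0$ on the identity branch and, by inversion symmetry, equal weight on $R_{\bm n}(\theta)$ and $R_{\bm n}(-\theta)$ for each transverse axis $\bm n$. I write the total pair weight as a probability measure $\nu$ on the unit circle of transverse axes, with mass $1-w_0$. Grouping each pair and applying \cref{eq:pair-average} gives
\[
M=w_0 I+\int\bigl(\cos\theta\,I+s\,\bm n\bm n^T\bigr)\nu(d\bm n)
= I - s(1-w_0)I + s\int \bm n\bm n^T\nu(d\bm n).
\]
For transverse $\bm n$, the matrix $\bm n\bm n^T$ lives in the $xy$ block, and $I_2-\bm n\bm n^T$ is the projector onto the transverse direction orthogonal to $\bm n$. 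I therefore set $D=s\int(I_2-\bm n\bm n^T)\,\nu(d\bm n)$. This $D$ is positive semidefinite, and $\Tr D=s(1-w_0)\le s$. The $zz$ entry is $1-s(1-w_0)=1-\Tr D$, and the off-diagonal blocks vanish. This yields \cref{eq:fixed-cone}.

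\emph{Sufficiency.} Given $D\succeq0$ with $\Tr D\le s$, I diagonalize $D=\lambda_1\bm e_1\bm e_1^T+\lambda_2\bm e_2\bm e_2^T$ with orthonormal transverse $\bm e_1,\bm e_2$ and $\lambda_i\ge0$. Because $I_2-\bm e_2\bm e_2^T=\bm e_1\bm e_1^T$, the pair with axis $\bm n=\bm e_2$ produces the term $\bm e_1\bm e_1^T$ in $D$, and vice versa. I then choose the weights: the pair with axis $\bm e_2$ gets total weight $\lambda_1/s$, the pair with axis $\bm e_1$ gets $\lambda_2/s$, and the identity gets $w_0=1-\Tr D/s\ge0$. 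Substituting these weights into the displayed formula recovers $M$, and the construction uses at most two opposite-axis pairs plus the identity, which is the final claim.

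The one point needing care is that the necessity step involves integrals over a possibly nonatomic $\nu$. The constraints are nevertheless preserved, since $D$ is an integral of positive semidefinite matrices with the stated trace. I also need to check that a tilted axis cannot occur. The hypothesis says axes lie in the plane orthogonal to the boost, so this is automatic; this is consistent with \cref{eq:wigner-angle}, where the axis is parallel to $\hat{\bm z}\times\hat{\bm p}$. Finally, $0<\theta<\pi/2$ gives $s>0$, so dividing by $s$ is legitimate.
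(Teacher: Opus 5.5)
Your proof is correct and follows the paper's argument. The paper sets $D=s(wI_2-Q)$ with $Q=\int\bm n\bm n^T\nu(d\bm n)$ and $w=1-w_0$, which is exactly your $s\int(I_2-\bm n\bm n^T)\,\nu(d\bm n)$; for the converse it spectrally decomposes $Q=wI_2-D/s$, which yields the same swapped assignment you chose (weight $\lambda_1/s$ on axis $\bm e_2$, weight $\lambda_2/s$ on axis $\bm e_1$), and the only slip is terminological, since $\nu$ is a finite measure of total mass $1-w_0$ rather than a probability measure.
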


\begin{proof}
Let the total nonidentity mass be $w$ and let
\begin{equation}
 Q=\int \bm n\bm n^T\,\nu(d\bm n)
\end{equation}
be its unnormalized transverse second moment, so $Q\succeq0$, $\Tr Q=w$, and $Q\preceq wI_2$.  Averaging \cref{eq:pair-average} and adding identity mass $1-w$ gives
\begin{equation}
 M_\perp=I_2-s(wI_2-Q),
 \qquad M_{zz}=1-sw.
\end{equation}
Thus $D=s(wI_2-Q)\succeq0$ and $\Tr D=sw\le s$.

Conversely, let $D\succeq0$ with $d=\Tr D\le s$.  Put $w=d/s$ and $Q=wI_2-D/s$.  If the eigenvalues of $D$ are $d_1,d_2$, then those of $Q$ are $d_2/s,d_1/s$, so $Q\succeq0$ and $\Tr Q=w$.  Its spectral decomposition writes $Q$ as the second moment of at most two transverse unit axes with total mass $w$.  Formula \cref{eq:pair-average} then reconstructs \cref{eq:fixed-cone}.
\end{proof}

The theorem is a reachability statement for one observer boost, not merely an abstract mixed-unitary decomposition.  It also shows the kinematic constraint missed by a three-axis model: for a fixed boost, the longitudinal deficit equals the trace of the transverse deficit matrix.

\begin{proposition}[Extreme channels and minimal branch mass]
\label{prop:extreme-cone}
The extreme points of the symmetric fixed-boost cone are the identity
and the full-mass opposite-momentum pair averages
\begin{equation}
 M_{\bm n}
 =\cos\theta\,I+s\,\bm n\bm n^T,
 \qquad \bm n\perp\hat{\bm z}.
\end{equation}
For a channel with deficit matrix $D$, every realization uses total
nonidentity probability
\begin{equation}
 w=\frac{\Tr D}{s}.
 \label{eq:branch-mass}
\end{equation}
Two opposite-axis pairs always suffice and are generically necessary
when $D$ has rank two.
\end{proposition}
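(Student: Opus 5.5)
The plan is to work entirely with the parametrization from the proof of \cref{thm:cone}. A realization is an identity mass $1-w$ plus a symmetric law $\nu$ on transverse unit axes with total mass $w$. It produces
\begin{equation}
 M_\perp=I_2-s(wI_2-Q),\qquad M_{zz}=1-sw .
\end{equation}
The map $(w,Q)\mapsto M$ is affine and injective. The set of admissible $(w,Q)$ is the convex set $\{0\le w\le1,\ Q\succeq0,\ \Tr Q=w\}$, which is the cone over the $2\times2$ density-type matrices truncated at $w=1$. So the cone \cref{eq:fixed-cone} is an affine image of this set, and extreme points correspond.

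First I prove the branch-mass formula \cref{eq:branch-mass}. It is immediate from the $zz$ entry: $M_{zz}=1-sw$ is fixed by $M$. Since $\Tr D=2\cdot(\text{sum of }I_2-M_\perp\text{ eigen-deficits})/2$, we have $\Tr D=sw$ from the converse construction, so $w=\Tr D/s$ for every realization. Note that $s>0$ because $0<\theta<\pi/2$.

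Next I identify the extreme points. The set $\{(w,Q)\}$ is a convex combination of the apex $(0,0)$ and the slice $w=1$, $Q=\bm n\bm n^T$. The apex maps to the identity. The slice is the set of unit-trace $2\times2$ positive matrices, whose extreme points are the rank-one projectors $\bm n\bm n^T$; these map to $M_{\bm n}=\cos\theta I+s\bm n\bm n^T$ by \cref{eq:pair-average}. To show these are all the extreme points, write any $(w,Q)$ with $0<w<1$ as $(1-w)(0,0)+w(1,Q/w)$, and then decompose $Q/w$ spectrally. To show each listed point is extreme, observe that the identity has $w=0$, which is minimal, so it cannot be a nontrivial mixture. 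For $M_{\bm n}$, $w=1$ is maximal and forces every component to have $w=1$; the rank-one $Q=\bm n\bm n^T$ is then extreme among unit-trace PSD matrices.

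Finally I treat the number of pairs. Sufficiency of two pairs is already contained in \cref{thm:cone}. For necessity, suppose $D$ has rank two with distinct eigenvalues, so $Q=wI_2-D/s$. A single pair gives $Q=w\bm n\bm n^T$, which is rank one. A single pair therefore works only if $D/s=w(I_2-\bm n\bm n^T)$, that is, only if $D$ is rank one. So rank-two $D$ needs at least two axes. The word "generically" refers to whether the two axes can be chosen orthogonal (the spectral choice) versus other two-axis decompositions. The main subtlety I expect is matching the rank of $Q$ to the rank of $D$. When $D$ has rank two with equal eigenvalues, $Q$ may be singular or full rank depending on $w$. So I would state the necessity carefully as: two pairs are necessary exactly when $Q=wI_2-D/s$ has rank two, which holds for generic $D$ of rank two.
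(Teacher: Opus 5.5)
Your proof is correct and follows the paper's route, only in the second-moment coordinates $(w,Q)$ instead of the deficit body $\{D\succeq0:\Tr D\le s\}$. The two are affinely identified by $D=s(wI_2-Q)$, which is why your extreme axes appear directly as $\bm n$ rather than as the transverse direction orthogonal to $\bm m$ in $D=s\,\bm m\bm m^T$; as in the paper, $w$ is fixed for every realization by $M_{zz}=1-sw=1-\Tr D$, and the pair count is governed by the rank of $Q$.

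One correction to your last paragraph. Because $w=\Tr D/s$ is forced (the $zz$ entry alone gives this, so invoking the converse construction there is unnecessary), $Q=wI_2-D/s$ has eigenvalues $d_2/s,\,d_1/s$ and therefore always has the same rank as $D$. Your equal-eigenvalue caveat is thus wrong. Your single-pair argument never used distinct eigenvalues, so it already shows that two pairs are necessary for \emph{every} rank-two $D$, not only generic ones.
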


\begin{proof}
The affine parameter body is
$\mathcal D_s=\{D\succeq0:\Tr D\le s\}$.  Its extreme points are
$0$ and $s|\bm m\rangle\!\langle\bm m|$: a nonzero matrix of trace
strictly below $s$ can be perturbed radially, while a trace-$s$ matrix
of rank two can be perturbed along its two spectral projectors.
Conversely, rank-one trace-$s$ matrices are extreme in the
positive-semidefinite trace ball.  Under the inverse construction in
\cref{thm:cone}, $D=s|\bm m\rangle\!\langle\bm m|$ corresponds to
$w=1$ and the Wigner axis orthogonal to $\bm m$ in the transverse
plane.  The longitudinal entry of every realization is $1-sw$, while
\cref{eq:fixed-cone} gives $1-\Tr D$, proving
\cref{eq:branch-mass}.  Finally, the rank of the transverse second
moment requires two axes for generic rank-two $D$, and its spectral
decomposition supplies them.
\end{proof}

\section{Strict trace-distance contraction without degradability}

Choose two orthogonal transverse Wigner axes, generated for example by momenta along $\pm x$ and $\pm y$.  Give each opposite-axis pair total weight $w$ and the identity branch weight $1-2w$.  The resulting matrix is
\begin{equation}
 M(w)=\diag(1-sw,1-sw,1-2sw),
 \qquad 0\le w\le\frac12.
 \label{eq:Mw}
\end{equation}

\begin{theorem}[Fixed-boost contraction no-go]
\label{thm:nogo}
Fix $0<\theta<\pi/2$ and $0<u<v<1/2$.  Let the source and target preparations have the same boost $\Lambda$ and the same five momentum directions, but pair weights $u$ and $v$, respectively.  Then:
\begin{enumerate}[label=(\roman*),leftmargin=2.2em]
\item both reduced spin maps are physical channels in the fixed-boost cone;
\item every equal-prior binary discrimination problem between two distinct spin states is strictly harder for the target:
\begin{equation}
 \|\Phi_v(\rho)-\Phi_v(\sigma)\|_1
 <
 \|\Phi_u(\rho)-\Phi_u(\sigma)\|_1
 \qquad(\rho\ne\sigma);
 \label{eq:all-pair-contraction}
\end{equation}
\item no CPTP map converts the source channel into the target channel;
\item the minimum normalized Choi eigenvalue of the unique factor is
\begin{equation}
 \boxed{
 \lambda_{\min}
 =-\frac{s^2u(v-u)}{2(1-su)(1-2su)}<0 }.
 \label{eq:negative-eig}
\end{equation}
\end{enumerate}
\end{theorem}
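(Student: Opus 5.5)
The plan is to reduce everything to the diagonal Bloch matrices \cref{eq:Mw} and then invoke \cref{thm:factor} together with the Fujiwara--Algoet eigenvalues \cref{eq:choi-eigs}. Write $a_w=1-sw$ and $b_w=1-2sw$, so that $M(w)=\diag(a_w,a_w,b_w)$. Since $0<\theta<\pi/2$ gives $0<s<1$, and $w<1/2$, both $a_w$ and $b_w$ lie in $(0,1]$. Hence $M(u)$ and $M(v)$ are invertible, with positive entries.

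For (i), I would read off the deficit matrix $D=\diag(sw,sw)\succeq0$, whose trace is $2sw\le s$ because $w\le1/2$. \cref{thm:cone} then places $M(w)$ in the fixed-boost cone. Physicality is automatic from \cref{eq:reduced-channel}, since $M(w)$ is an average of rotations with the stated five-branch weights.

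For (ii), I would use the Bloch-ball identity $\|\rho-\sigma\|_1=|\bm r_\rho-\bm r_\sigma|$ for qubits. This gives $\|\Phi_w(\rho)-\Phi_w(\sigma)\|_1=|M(w)\bm x|$ with $\bm x=\bm r_\rho-\bm r_\sigma\neq0$. Put $\bm y=M(u)\bm x$, which is nonzero by invertibility. Then $M(v)\bm x=T\bm y$, where
\[
T=\diag(\eta,\eta,\eta_3),\qquad \eta=\frac{a_v}{a_u},\quad \eta_3=\frac{b_v}{b_u}.
\]
Because $s>0$ and $u<v$, both ratios lie strictly in $(0,1)$. Therefore $|T\bm y|<|\bm y|$ for every $\bm y\ne0$, which is exactly the strict inequality \cref{eq:all-pair-contraction}.

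For (iii) and (iv), \cref{thm:factor} says that $T$ is the unique candidate factor. Its diagonal entries are positive, so $T$ is already in signed-SVD form with $R_1=R_2=I$ and $\eta_1=\eta_2=\eta$. The four eigenvalues in \cref{eq:choi-eigs} become
\[
p_0=\tfrac14(1+2\eta+\eta_3),\qquad p_1=p_2=\tfrac14(1-\eta_3),\qquad p_3=\tfrac14(1-2\eta+\eta_3).
\]
The first three are positive, because $\eta>0$ and $0<\eta_3<1$. The only nontrivial step is the sign of $p_3$. I would put it over the common denominator $a_ub_u$ and expand
\[
a_ub_u-2a_vb_u+b_va_u .
\]
In this expansion the constant terms cancel, and so do the terms linear in $su$ and in $sv$. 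What remains at order $s^2$ is $2u^2-4uv+2uv=-2u(v-u)$. This yields
\[
p_3=-\frac{s^2u(v-u)}{2(1-su)(1-2su)}<0,
\]
which is the boxed value \cref{eq:negative-eig}. Because $p_3<0$, the factor $T$ is not completely positive, and \cref{thm:factor} then rules out any CPTP converter, proving (iii).

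The main obstacle is only this bookkeeping: confirming that the first-order terms cancel exactly, so that the margin is quadratic in $s$ and strictly negative whenever $u>0$.
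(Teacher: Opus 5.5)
Your proof is correct and follows the paper's argument step for step: cone membership via $D=\diag(sw,sw)$ with $\Tr D=2sw\le s$, the trace norm written as $|M(w)\bm x|$, the unique diagonal factor $T=\diag(\eta,\eta,\eta_3)$ from \cref{thm:factor}, and the Fujiwara--Algoet eigenvalues, where the constant and first-order terms of $p_3$ cancel to leave $-s^2u(v-u)/[2(1-su)(1-2su)]$. The only cosmetic difference is in (ii): you write $M(v)\bm x=T\,M(u)\bm x$ with both entries of $T$ in $(0,1)$, whereas the paper compares the coefficients of $|M(v)\bm x|^2$ and $|M(u)\bm x|^2$ directly, and the two arguments are equivalent.
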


\begin{proof}
Physicality follows from the explicit probability weights and \cref{thm:cone}.  Since $v>u$ and $s>0$, all three positive diagonal entries decrease strictly.  For $\rho-\sigma=(\bm x\cdot\bm\sigma)/2$, the qubit trace norm is
\begin{equation}
 \|\Phi_w(\rho)-\Phi_w(\sigma)\|_1=|M(w)\bm x|_2.
\end{equation}
Every coefficient of $|M(v)\bm x|_2^2$ is strictly smaller than the corresponding coefficient for $u$, proving \cref{eq:all-pair-contraction} for every $\bm x\ne0$.  The source is invertible because $2su<1$.  The unique factor is
\begin{equation}
 T=\diag(t,t,z),
 \quad
 t=\frac{1-sv}{1-su},
 \quad
 z=\frac{1-2sv}{1-2su}.
 \label{eq:T}
\end{equation}
It satisfies $0<z<t<1$.  Three normalized Choi eigenvalues are positive:
\begin{equation}
 p_0=\frac{1+2t+z}{4}>0,
 \qquad
 p_1=p_2=\frac{1-z}{4}>0.
\end{equation}
The fourth is
\begin{align}
 p_3&=\frac{1-2t+z}{4}\\
 &=-\frac{s^2u(v-u)}{2(1-su)(1-2su)},
\end{align}
which proves the no-go by \cref{thm:factor}.
\end{proof}

\begin{figure}[t]
\centering
\includegraphics[width=\columnwidth]{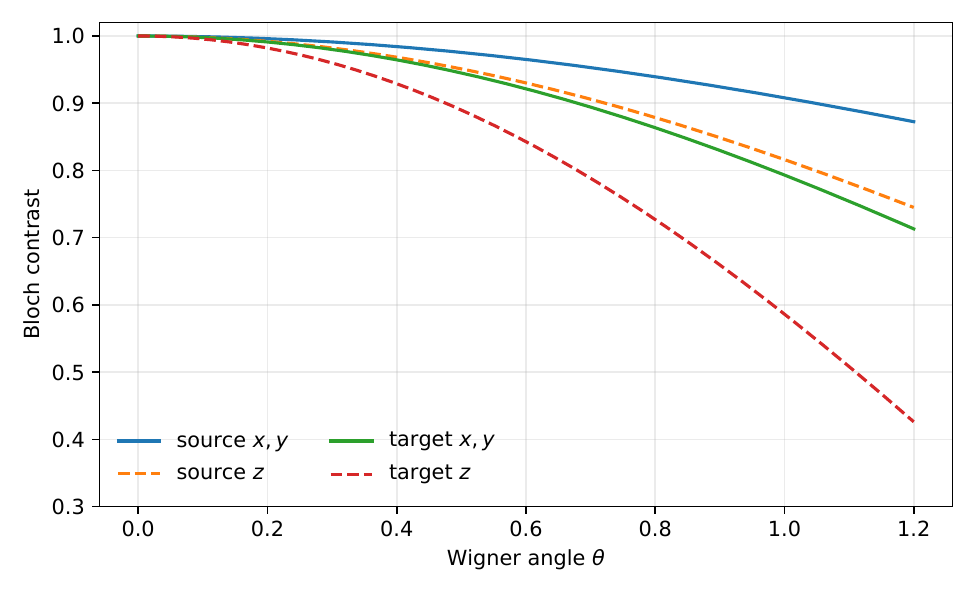}
\caption{All three target contrasts lie below the source contrasts for one fixed boost, while the unique factor has a negative Choi eigenvalue.  Parameters shown: $u=0.20$, $v=0.45$.}
\label{fig:contraction}
\end{figure}

\begin{proposition}[Strict order of standard damping summaries]
\label{prop:scalar-monotones}
On the parameter interval $0\le\alpha<1/2$, let
$M_\alpha=\diag(1-\alpha,1-\alpha,1-2\alpha)$.  Each of the following
quantities is strictly decreasing in $\alpha$:
\begin{align}
 \eta_{\rm tr}(\Phi_\alpha)
 &:=\sup_{\rho\ne\sigma}
 \frac{\|\Phi_\alpha(\rho)-\Phi_\alpha(\sigma)\|_1}
      {\|\rho-\sigma\|_1}
 =1-\alpha,\label{eq:trace-contraction-coeff}\\
 s_{\min}(M_\alpha)&=1-2\alpha,\label{eq:min-singular}\\
 \det M_\alpha&=(1-\alpha)^2(1-2\alpha),\label{eq:bloch-volume}\\
 \mathfrak u(\Phi_\alpha)
 &:=\frac13\Tr(M_\alpha^TM_\alpha)
 =\frac{2(1-\alpha)^2+(1-2\alpha)^2}{3}.
 \label{eq:unitarity}
\end{align}
Every interior member is not entanglement breaking, since
$2(1-\alpha)+(1-2\alpha)>1$.  Nevertheless, by
\cref{cor:antichain}, no two distinct positive-noise members are
CPTP-comparable.
\end{proposition}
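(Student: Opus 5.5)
The plan is to verify the four formulas one at a time, using only the diagonal form of $M_\alpha$ and the qubit trace-norm identity already used in the proof of \cref{thm:nogo}. The entanglement-breaking remark and the antichain statement are handled separately: the first by the standard criterion for Pauli-diagonal maps, the second by citation. Throughout, write $M_\alpha=\diag(a,a,b)$ with $a=1-\alpha$ and $b=1-2\alpha$. On $0\le\alpha<1/2$ we have $a\ge b>0$, and both $a$ and $b$ are strictly decreasing in $\alpha$.

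For \cref{eq:trace-contraction-coeff}, write $\rho-\sigma=(\bm x\cdot\bm\sigma)/2$. Then $\|\rho-\sigma\|_1=|\bm x|_2$ and $\|\Phi_\alpha(\rho)-\Phi_\alpha(\sigma)\|_1=|M_\alpha\bm x|_2$, so the ratio is at most the operator norm $\|M_\alpha\|=\max(a,b)=a$. Equality is attained by choosing $\bm x$ along a transverse axis, for example $\rho,\sigma=(I\pm\sigma_x)/2$. This needs the admissible differences to fill a neighbourhood of every direction, which holds because the Bloch ball contains all small $\bm x$.

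For \cref{eq:min-singular}, the matrix is diagonal and positive, so its singular values are $a,a,b$ and the smallest is $b$. For \cref{eq:bloch-volume} and \cref{eq:unitarity}, direct multiplication gives $\det M_\alpha=a^2b$ and $\frac13\Tr(M_\alpha^TM_\alpha)=(2a^2+b^2)/3$. All four quantities are products or sums of squares of the positive, strictly decreasing functions $a$ and $b$, so each is strictly decreasing. This monotonicity argument is the part I would write out explicitly.

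For the non-entanglement-breaking claim, I would use the Pauli-diagonal criterion: $\diag(\eta_1,\eta_2,\eta_3)$ is entanglement breaking iff $|\eta_1|+|\eta_2|+|\eta_3|\le1$. Here $2a+b=3-4\alpha>1$ precisely when $\alpha<1/2$, so every member of the interval fails to be entanglement breaking. The identity member at $\alpha=0$ also satisfies this, which is consistent. The incomparability assertion is attributed to \cref{cor:antichain} and needs no proof here.

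As a sanity link, I would note that it also follows from \cref{thm:nogo} with $s\,u=\alpha$ and $s\,v=\beta$. This identification is possible after choosing $s$ close to $1$ (that is, $\theta$ close to $\pi/2$) so that $\alpha/s$ and $\beta/s$ lie below $1/2$. Alternatively, one can directly recompute $p_3=(1-2t+z)/4<0$ with $t=(1-\beta)/(1-\alpha)$ and $z=(1-2\beta)/(1-2\alpha)$.

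The only step with any subtlety is the attainment of the supremum in \cref{eq:trace-contraction-coeff}, and it is resolved by the explicit $\sigma_x$ eigenstate pair.
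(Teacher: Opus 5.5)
Your proposal is correct and follows essentially the paper's route: the trace contraction coefficient is the largest singular value of the Bloch matrix, the remaining quantities are computed directly and shown strictly decreasing, the sum-of-singular-values entanglement-breaking criterion gives $3-4\alpha>1$, and incomparability is cited from \cref{cor:antichain}. One small caveat concerns your optional sanity link: the negative $p_3$ from \cref{thm:nogo} only excludes $\Phi_\beta=\Lambda\Phi_\alpha$ for $\alpha<\beta$, so the reverse direction needs a separate argument, namely the strict contraction (ii) together with the data-processing inequality, or the $\beta<\alpha$ branch of \cref{thm:exact-deficiency}.
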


\begin{proof}
The trace contraction coefficient of a unital qubit channel is the
largest singular value of its Bloch matrix, giving
\cref{eq:trace-contraction-coeff}.  The remaining formulas are direct.
Their derivatives are strictly negative on $(0,1/2)$.  A unital qubit
channel is entanglement breaking exactly when the sum of its singular
values is at most one; here that sum is $3-4\alpha>1$.
\end{proof}

\section{Exact physical post-processing deficiency}
\label{sec:exact-deficiency}

Nonexistence of an exact factor is a qualitative statement.  We now solve the quantitative optimization over \emph{all} physical converters.  Put
\begin{equation}
 \alpha=su,\qquad \beta=sv,
 \qquad 0\le\alpha,\beta<\frac12,
 \label{eq:alpha-beta}
\end{equation}
and write the channel as the Pauli mixture
\begin{equation}
 \Phi_\alpha(\rho)
 =(1-\alpha)\rho+\frac{\alpha}{2}X\rho X
 +\frac{\alpha}{2}Y\rho Y.
 \label{eq:pauli-line}
\end{equation}
Its Bloch matrix is exactly $\diag(1-\alpha,1-\alpha,1-2\alpha)$.

\begin{theorem}[Exact directed CPTP deficiency]
\label{thm:exact-deficiency}
For the Pauli line \cref{eq:pauli-line}, define
\begin{equation}
 \delta_\diamond(\beta|\alpha)
 =\frac12\inf_{\Lambda\in\CPTP}
 \|\Phi_\beta-\Lambda\circ\Phi_\alpha\|_\diamond.
 \label{eq:deficiency}
\end{equation}
Then
\begin{equation}
 \boxed{
 \delta_\diamond(\beta|\alpha)=
 \begin{cases}
  \alpha-\beta,&0\le\beta<\alpha,\\[2pt]
  0,&\beta=\alpha,\\[2pt]
  \displaystyle\frac{\alpha(\beta-\alpha)}{2-3\alpha},
       &\alpha<\beta<\frac12.
 \end{cases}}
 \label{eq:exact-deficiency}
\end{equation}
For $\beta>\alpha$, an optimal converter is the Pauli channel with error law
\begin{equation}
 r_*=(1-2x_*,x_*,x_*,0),
 \qquad
 x_*=\frac{\beta-\alpha}{2-3\alpha}.
 \label{eq:optimal-converter}
\end{equation}
For $\beta<\alpha$, the identity converter is optimal.
\end{theorem}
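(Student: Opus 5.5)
The plan is to use the symmetry of the Pauli line to reduce the optimization over all CPTP converters to a finite-dimensional linear program over Pauli error laws, and then solve that program exactly.

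\emph{Symmetry reduction.} Both $\Phi_\alpha$ and $\Phi_\beta$ in \cref{eq:pauli-line} commute with conjugation by every Pauli operator. They also commute with the unitary $e^{-i\pi Z/4}$, which exchanges $X$ and $Y$. Given any converter $\Lambda$, I will replace it by its twirl over the group generated by these unitaries: average $\mathcal U^\dagger\circ\Lambda\circ\mathcal U$ over the group. This changes the target only by conjugating it with unitaries, because $\mathcal U^\dagger\Phi_\beta\,\mathcal U=\Phi_\beta$ and $\mathcal U\Phi_\alpha=\Phi_\alpha\,\mathcal U$. The diamond norm is invariant under such unitary conjugation and is convex, so twirling never increases the objective in \cref{eq:deficiency}. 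We may therefore take $\Lambda$ to be a Pauli channel with law $r=(r_0,x,x,r_3)$, where $r_0+2x+r_3=1$.

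Then $\Lambda\Phi_\alpha$ is again a Pauli channel. Its law is the Klein-group convolution
\begin{align*}
q_0&=(1-\alpha)r_0+\alpha x,\\
q_1=q_2&=(1-\alpha)x+\tfrac{\alpha}{2}(r_0+r_3),\\
q_3&=\alpha x+(1-\alpha)r_3 .
\end{align*}
For two Pauli channels, the half diamond distance equals the total-variation distance of their error laws, by the exact discrimination of Pauli mixtures \cite{Sacchi2005,Watrous2018}; the maximally entangled input is optimal. So the problem becomes
\[
\delta_\diamond=\min_r \tfrac12\Bigl(|q_0-(1-\beta)|+2|q_1-\tfrac\beta2|+q_3\Bigr).
\]

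\emph{Case $\beta\le\alpha$.} The identity weight $q_0$ is a convex combination of $1-\alpha$ and $\alpha/2$, so $q_0\le1-\alpha$. Since total variation is at least the gap in a single coordinate, $\delta_\diamond\ge (1-\beta)-q_0\ge\alpha-\beta$. The identity converter attains this bound, and gives $0$ when $\beta=\alpha$.

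\emph{Case $\beta>\alpha$.} First I will show that $r_3=0$ is optimal. Moving mass from $r_3$ to $r_0$ raises $q_0$, leaves $q_1$ unchanged, and lowers $q_3$; in the relevant regime $q_0<1-\beta$, so this move cannot increase the objective. This is the point to double-check against the sign pattern. With $r_3=0$, a short computation gives $q_0-(1-\beta)=(\beta-\alpha)-(2-3\alpha)x$. The objective is then piecewise linear in $x\in[0,1/2]$.

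The main obstacle is confirming the sign pattern at the optimum: $q_0\ge1-\beta$, $q_1\le\beta/2$, with the excess mass split between $q_0$ and $q_3$. To settle it, I will use that since the laws are normalized, $\mathrm{TV}$ equals the total excess $\sum_i\bigl(q_i-\mathrm{target}_i\bigr)_+$. For $x\le x_*$ only $q_0$ and $q_3$ can exceed their targets, so
\[
\mathrm{TV}=(\beta-\alpha)-(2-3\alpha)x+\alpha x=(\beta-\alpha)-2(1-\alpha)x,
\]
which is decreasing in $x$. At $x=x_*=(\beta-\alpha)/(2-3\alpha)$ the identity coordinate matches exactly, and
\[
\mathrm{TV}=\alpha x_*=\frac{\alpha(\beta-\alpha)}{2-3\alpha}.
\]
For $x>x_*$ the deficit in $q_0$ turns into an excess in $q_1$, and the objective increases again. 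This yields both the value in \cref{eq:exact-deficiency} and the optimal converter \cref{eq:optimal-converter}.

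To make the lower bound independent of the $r_3=0$ step, I will also exhibit an explicit dual certificate: a signed test vector on the four Pauli outcomes, with entries in $[-1,1]$, that witnesses the value $\alpha x_*$ uniformly over all laws $r$, including $r_3>0$.
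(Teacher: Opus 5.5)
Your proposal is correct and follows the paper's proof: Pauli twirl (you fold in the $X\leftrightarrow Y$ exchange through $e^{-i\pi Z/4}$, while the paper does it at the linear-program level), diamond distance $=$ total variation, the bound $q_0\le 1-\alpha$ for $\beta<\alpha$, then an exact solution of the small linear program. The one real difference is that you eliminate $r_3$ by shifting its mass to $r_0$, where the paper uses its $(\Delta,h,z)$ case split. That shift is valid in every regime: moving $\epsilon$ changes $|q_0-(1-\beta)|$ by at most $(1-\alpha)\epsilon$, lowers $q_3$ by exactly $(1-\alpha)\epsilon$, and leaves $q_1$ fixed. So no dual certificate is needed, and $\alpha=0$ needs no separate case.

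Fix one algebra slip. For $x\le x_*$ the objective is $(\beta-\alpha)-2(1-2\alpha)x$, not $(\beta-\alpha)-2(1-\alpha)x$; your version would be negative at $x_*$. For $x>x_*$ the objective equals $\alpha x+(2q_1-\beta)_+$, which is nondecreasing in $x$, so the minimum $\alpha x_*$ is at $x_*$.
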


\begin{proof}
Let $P_0=I,P_1=X,P_2=Y,P_3=Z$.  Pauli twirling any converter gives
\begin{equation}
 \overline\Lambda
 =\frac14\sum_{j=0}^3
 \operatorname{Ad}_{P_j}\circ\Lambda\circ\operatorname{Ad}_{P_j}.
 \label{eq:converter-twirl}
\end{equation}
Both $\Phi_\alpha$ and $\Phi_\beta$ commute with every Pauli conjugation.  Convexity and unitary invariance of the diamond norm therefore imply
\begin{equation}
 \|\Phi_\beta-\overline\Lambda\Phi_\alpha\|_\diamond
 \le\|\Phi_\beta-\Lambda\Phi_\alpha\|_\diamond.
\end{equation}
The twirled converter is a Pauli channel, represented by a probability vector $r$ on the four-element Pauli group modulo phases.  If $p=(1-\alpha,\alpha/2,\alpha/2,0)$ and $q=(1-\beta,\beta/2,\beta/2,0)$, the composite has error law $r*p$.  For any two Pauli laws $a,b$,
\begin{equation}
 \frac12\|\Phi_a-\Phi_b\|_\diamond
 =\frac12\|a-b\|_1.
 \label{eq:pauli-isometry}
\end{equation}
The lower bound follows by applying the channels to one half of a maximally entangled state, which produces orthogonal Bell outcomes; the triangle inequality gives the reverse bound.  Consequently \cref{eq:deficiency} is a four-point total-variation linear program.

The source and target laws are invariant under exchanging $X$ and $Y$, so an optimizer can be averaged with its exchange and written
\begin{equation}
 r=(1-2x-z,x,x,z),
 \qquad x,z\ge0,\quad 2x+z\le1.
\end{equation}
Writing $s'=r*p$, direct convolution gives
\begin{align}
 s'_0&=1-\alpha-(2-3\alpha)x-(1-\alpha)z,\nonumber\\
 s'_1=s'_2&=\frac{\alpha}{2}+(1-2\alpha)x,\nonumber\\
 s'_3&=\alpha x+(1-\alpha)z.
 \label{eq:convolution-coordinates}
\end{align}
Assume first $d:=\beta-\alpha>0$ and $\alpha>0$.  Set
\begin{align}
 h&=s'_3, & \Delta&=q_0-s'_0,\nonumber\\
 c&=\frac{2-3\alpha}{\alpha}>1,
 &k&=\frac{2(1-\alpha)(1-2\alpha)}{\alpha}\geq0.
\end{align}
Eliminating $x$ from \cref{eq:convolution-coordinates} gives
\begin{equation}
 \Delta=-d+ch-kz.
 \label{eq:delta-relation}
\end{equation}
Normalization and the $X/Y$ symmetry reduce the objective to
\begin{equation}
 \TV(q,s')
 =\frac12\bigl(|\Delta|+|h-\Delta|+h\bigr).
 \label{eq:tv-three}
\end{equation}
If $0\le\Delta\le h$, then \cref{eq:delta-relation} implies $h\ge d/c$ and \cref{eq:tv-three} equals $h$.  If $\Delta<0$, then
\begin{equation}
 \TV(q,s')=h-\Delta
 =d-(c-1)h+kz>\frac dc+\frac{kz}{c}\ge\frac dc,
\end{equation}
where $\Delta<0$ implies $h<(d+kz)/c$.  Finally, if $\Delta>h$, then
\begin{equation}
 \TV(q,s')=\Delta>h>\frac{d+kz}{c-1}>\frac dc.
\end{equation}
Thus the optimum is at least $d/c$.  Equality is attained by $z=0$ and
$x=x_*=d/(2-3\alpha)$, for which $\Delta=0$ and
$h=\alpha d/(2-3\alpha)$.  The constraint $2x_*\le1$ follows from
$\beta<1/2$.  When $\alpha=0$, the same optimizer is simply
$r_*=q$ and converts the identity channel exactly, agreeing with the
continuous limit of the formula.

If $\beta<\alpha$, then $q_0=1-\beta>1-\alpha=\max_i p_i$.  For every
probability law $r$, the identity component of $r*p$ is a convex
combination of the components of $p$ and is at most $1-\alpha$.
Therefore
\begin{equation}
 \TV(q,r*p)\ge q_0-(r*p)_0\ge\alpha-\beta.
\end{equation}
Taking $r$ concentrated at the identity attains equality.  The case
$\alpha=\beta$ is immediate.
\end{proof}

\begin{corollary}[Operational discrimination meaning of the deficiency]
\label{cor:deficiency-discrimination}
For every CPTP converter $\Lambda$, the optimal equal-prior one-use success
probability for distinguishing the target channel $\Phi_\beta$ from the
converted source $\Lambda\circ\Phi_\alpha$ satisfies
\begin{equation}
 P_{\rm succ}(\Phi_\beta,\Lambda\Phi_\alpha)
 =\frac12+\frac14\|\Phi_\beta-\Lambda\Phi_\alpha\|_\diamond
 \geq\frac12+\frac12\delta_\diamond(\beta|\alpha).
 \label{eq:deficiency-operational}
\end{equation}
Equality is attained by the converters in
\cref{thm:exact-deficiency}.  Hence the unavoidable discrimination advantage
over random guessing is exactly
\begin{equation}
 \frac12\delta_\diamond(\beta|\alpha)=
 \begin{cases}
 (\alpha-\beta)/2,&\beta<\alpha,\\[2pt]
 \alpha(\beta-\alpha)/[2(2-3\alpha)],&\beta>\alpha.
 \end{cases}
 \label{eq:operational-advantage}
\end{equation}
\end{corollary}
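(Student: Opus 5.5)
The corollary combines the Helstrom--Holevo theorem for channel discrimination with \cref{thm:exact-deficiency}. First, for any two channels $\mathcal E_0,\mathcal E_1$ given with equal priors and used once, the optimal strategy is to prepare a (possibly ancilla-assisted) input $\rho_{RA}$, apply the unknown channel, and run the Helstrom measurement on the output. For fixed input the success probability is $\tfrac12+\tfrac14\|(\id\otimes\mathcal E_0)(\rho)-(\id\otimes\mathcal E_1)(\rho)\|_1$. Maximizing over inputs gives $\tfrac12+\tfrac14\|\mathcal E_0-\mathcal E_1\|_\diamond$ by the definition of the diamond norm. A qubit-dimensional ancilla suffices, and the supremum is attained by compactness. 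Taking $\mathcal E_0=\Phi_\beta$ and $\mathcal E_1=\Lambda\Phi_\alpha$ gives the equality in \cref{eq:deficiency-operational}.

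Second, for any CPTP $\Lambda$, the definition \cref{eq:deficiency} gives $\tfrac12\|\Phi_\beta-\Lambda\Phi_\alpha\|_\diamond\ge\delta_\diamond(\beta|\alpha)$. Substituting yields the inequality $P_{\rm succ}\ge\tfrac12+\tfrac12\delta_\diamond$.

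Third, equality holds for the converters named in \cref{thm:exact-deficiency}. For $\beta>\alpha$ this is the Pauli channel $r_*$ of \cref{eq:optimal-converter}; for $\beta<\alpha$ it is the identity. The proof of that theorem shows $\tfrac12\|\Phi_\beta-\Lambda\Phi_\alpha\|_\diamond=\TV(q,r*p)$ by the Pauli isometry \cref{eq:pauli-isometry}, and this equals the boxed values. So the infimum is attained, and these converters give equality. Inserting the cases of \cref{eq:exact-deficiency} and halving gives \cref{eq:operational-advantage}. When $\beta=\alpha$ the advantage is zero.

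The only step needing care is the first: identifying the one-use success probability with the diamond norm, including the role of entangled ancillas. I would cite the standard Helstrom--Kitaev argument (e.g.\ \cite{Watrous2018}) rather than reprove it. Everything else is direct substitution into the already proved theorem.
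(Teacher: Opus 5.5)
Your proposal is correct and follows the paper's own argument. The channel Helstrom theorem gives the equality, and the definition of $\delta_\diamond(\beta|\alpha)$ gives the lower bound. The explicit Pauli (or identity) converters of \cref{thm:exact-deficiency} attain the infimum, which you verify via the Pauli isometry \cref{eq:pauli-isometry}; the paper simply cites this attainment without re-evaluating it.
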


\begin{proof}
The channel Helstrom theorem gives the first equality.  The definition of
$\delta_\diamond$ gives the lower bound, and the explicit optimal converters
from \cref{thm:exact-deficiency} attain the infimum.
\end{proof}

\begin{corollary}[A trace-distance-ordered Blackwell antichain]
\label{cor:antichain}
Fix a nonzero Wigner angle and the family $\{\Phi_w:0\le w\le1/2\}$.
The noiseless member $\Phi_0=\id$ post-processes to every $\Phi_w$.
Every two distinct members with positive weights are incomparable under
CPTP post-processing.  Nevertheless, if $0<u<v$, then
\cref{eq:all-pair-contraction} strictly orders \emph{every} pairwise
spin-state trace distance.
\end{corollary}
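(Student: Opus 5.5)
The corollary has three parts, and each reduces to a result already proved; the main work is bookkeeping between the two parametrizations. Throughout I use the dictionary \cref{eq:alpha-beta}. A member $\Phi_w$ of \cref{eq:Mw} is exactly the Pauli-line channel $\Phi_\alpha$ of \cref{eq:pauli-line} with $\alpha=sw$. Since $s=1-\cos\theta\in(0,1)$ for a nonzero angle below $\pi/2$, and $w\le1/2$, we get $\alpha\le s/2<1/2$. Thus the whole family lies on the Pauli line with $0\le\alpha<1/2$. The map $w\mapsto sw$ is strictly increasing, so distinct weights give distinct $\alpha$, and positive weights give $\alpha>0$.

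\emph{Domination by the identity.} For $\Phi_0=\id$, take $\Lambda=\Phi_w$ itself, which is CPTP by \cref{thm:nogo}(i) or \cref{thm:cone}. Then $\Lambda\circ\Phi_0=\Phi_w$ trivially. Equivalently, the $\alpha=0$ case of \cref{thm:exact-deficiency} gives zero deficiency.

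\emph{Incomparability.} Let $0<u<v\le1/2$, so $0<\alpha<\beta<1/2$. I would invoke \cref{thm:exact-deficiency} in both directions. Forward, $\delta_\diamond(\beta|\alpha)=\alpha(\beta-\alpha)/(2-3\alpha)$, which is strictly positive because $\alpha>0$, $\beta>\alpha$ and $2-3\alpha>0$. Hence no CPTP $\Lambda$ gives $\Phi_\beta=\Lambda\Phi_\alpha$. Converse, $\delta_\diamond(\alpha|\beta)=\beta-\alpha>0$, so no converter goes from the noisier member back to the cleaner one.

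The step needing care is that the deficiency is an infimum, so positivity of the infimum must exclude an exact factor. This holds directly: an exact factor would make the infimum $0$. Alternatively, for the forward direction one can use \cref{thm:factor} together with the negative eigenvalue \eqref{eq:negative-eig}, valid for $v<1/2$. For the boundary case $v=1/2$ the deficiency formula still applies whenever $\beta=s/2<1/2$, which always holds. For the reverse direction, a one-line independent check is that $\Phi_\beta$ is invertible and $M_\alpha M_\beta^{-1}$ has an entry exceeding $1$, which rules out a channel.

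\emph{Ordering of trace distances.} For $0<u<v$, inequality \cref{eq:all-pair-contraction} is exactly \cref{thm:nogo}(ii). Its proof only used that the diagonal entries of $M(w)$ are positive and strictly decreasing in $w$. This remains true up to $w=1/2$, since $1-2sw\ge1-s>0$. So the statement extends to the closed range, and the three parts together prove the corollary.
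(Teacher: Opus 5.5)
Your proposal is correct and follows the paper's own argument. Like the paper, you identify each member with the Pauli-line channel at $\alpha=sw$, read off positivity of both directed deficiencies from \cref{thm:exact-deficiency}, use the converter $\Phi_v$ for the identity member, and take the trace-distance ordering from \cref{thm:nogo}(ii). Your explicit check of the endpoint $w=1/2$ (where $\beta=s/2<1/2$ and $1-2sw=\cos\theta>0$) also covers a boundary case that the paper's one-line proof does not address.
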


\begin{proof}
For $u,v>0$ and $u\ne v$, both directed quantities in
\cref{eq:exact-deficiency} are positive.  For $u=0$, choose the
converter $\Phi_v$.  The trace-distance statement is
\cref{thm:nogo}.
\end{proof}

\begin{figure}[t]
\centering
\includegraphics[width=\columnwidth]{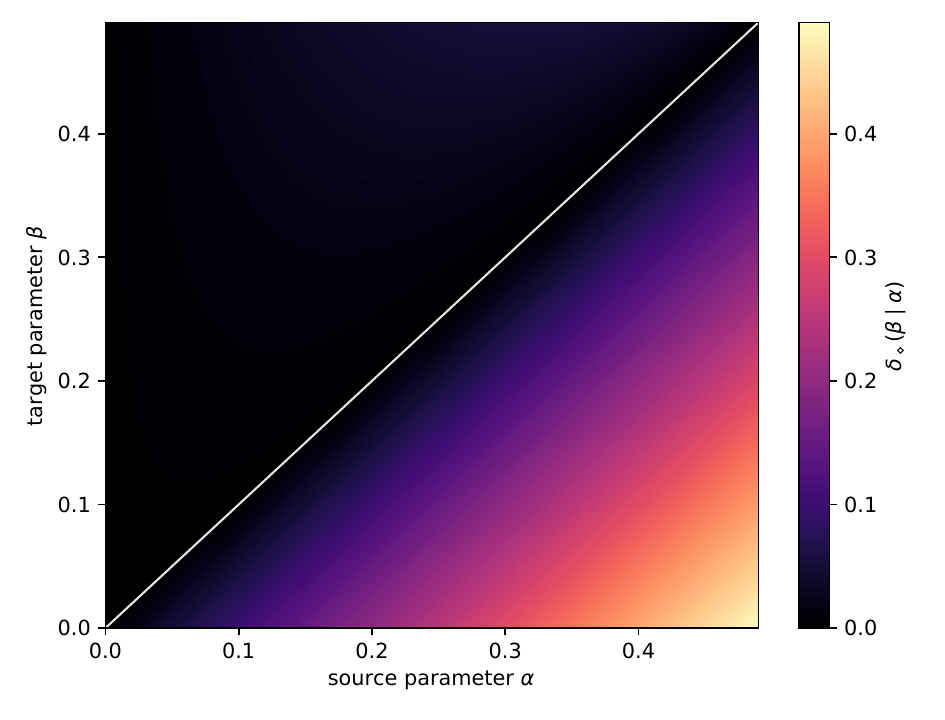}
\caption{Exact directed CPTP deficiency
$\delta_\diamond(\beta|\alpha)$ on the Pauli line.  The diagonal is the
only zero set in the positive-noise interior; the edge $\alpha=0$ is
also zero because every target is a post-processing of the identity.}
\label{fig:deficiency}
\end{figure}

For $\beta>\alpha$, the exact operational gap and the negative factor-Choi
eigenvalue are related by
\begin{equation}
 \delta_\diamond(\beta|\alpha)
 =
 \frac{2(1-\alpha)(1-2\alpha)}{2-3\alpha}
 \bigl[-\lambda_{\min}(\omega_T)\bigr].
 \label{eq:gap-vs-negativity}
\end{equation}
Thus the unphysical-factor witness and the best achievable physical
simulation error have the same leading scale but are not identical.

The failure is second order in the Wigner angle.  At small $\theta$,
\begin{equation}
 -\lambda_{\min}
 =\frac{u(v-u)}8\theta^4+O(\theta^6).
 \label{eq:small-angle}
\end{equation}
The exact deficiency has the same expansion,
\begin{equation}
 \delta_\diamond(sv|su)
 =\frac{u(v-u)}8\theta^4+O(\theta^6).
 \label{eq:small-angle-deficiency}
\end{equation}
Thus the nonrelativistic regime remains conceptually nondegradable but is statistically demanding.

\begin{figure}[t]
\centering
\includegraphics[width=\columnwidth]{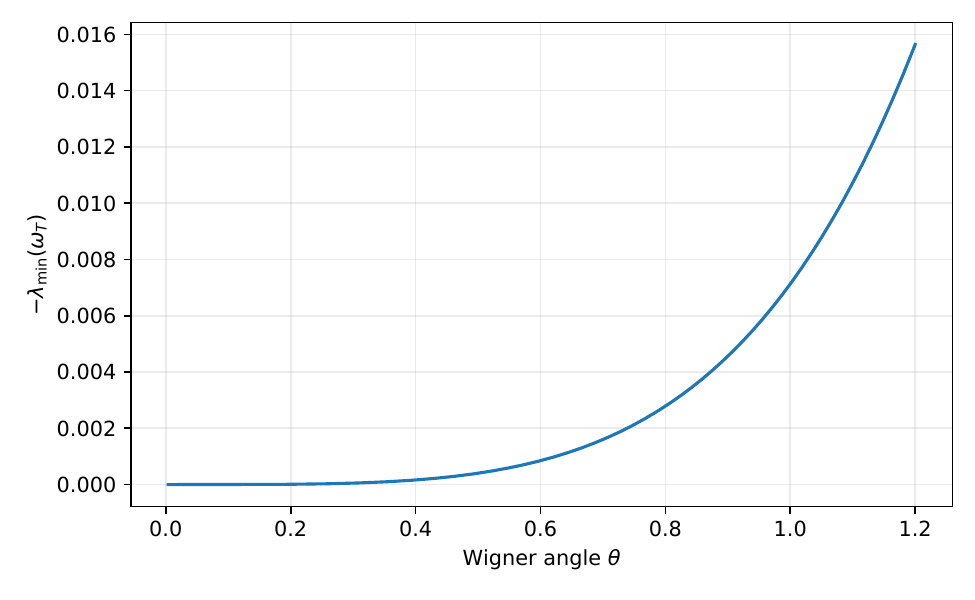}
\caption{Analytic nondegradability margin from \cref{eq:negative-eig}.  The effect is strict for every nonzero Wigner angle and every $0<u<v<1/2$.}
\label{fig:margin}
\end{figure}

\subsection{A finite-rapidity example}

Take $\theta=0.6$, $u=1/5$, $v=9/20$, and a fixed observer rapidity $\xi=1.6$.  Equation \eqref{eq:wigner-angle} gives particle rapidity
\begin{equation}
 \zeta=2\operatorname{artanh}
 \left(\frac{\tan(0.3)}{\tanh(0.8)}\right)
 \simeq1.00949.
\end{equation}
The corresponding speeds are $\tanh\xi\simeq0.92167$ and $\tanh\zeta\simeq0.76555$.  The Bloch matrices are
\begin{align}
 M_A&\simeq\diag(0.965067,0.965067,0.930134),\\
M_B&\simeq\diag(0.921401,0.921401,0.842802),
\end{align}
and $\lambda_{\min}\simeq-8.49661\times10^{-4}$.  The exact best
physical simulation error is
\begin{equation}
 \delta_\diamond(B|A)\simeq8.04866\times10^{-4},
\end{equation}
attained by \cref{eq:optimal-converter}; it is not merely bounded from
below by the unphysical factor.  Lower rapidities are obtained by
decreasing $\theta$, with the quartic margins in
\cref{eq:small-angle,eq:small-angle-deficiency}.

\section{A Pauli-Markov embedding obstruction}

The Blackwell antichain is a comparison statement between different boost
parameters.  A complementary question is whether one nontrivial channel on
the line can itself be obtained at positive time from a time-homogeneous
Pauli Lindblad semigroup.

\begin{theorem}[No nontrivial Pauli-semigroup embedding]
\label{thm:pauli-semigroup-obstruction}
Let $0<\alpha<1/2$ and let $\Phi_\alpha$ be the channel
\eqref{eq:pauli-line}.  There are no rates
$\gamma_x,\gamma_y,\gamma_z\geq0$ and no $t>0$ such that
\begin{equation}
 \Phi_\alpha=e^{t\mathcal L},\qquad
 \mathcal L(\rho)=\sum_{j=x,y,z}
 \gamma_j(\sigma_j\rho\sigma_j-\rho).
 \label{eq:pauli-lindblad-generator}
\end{equation}
Equivalently, every nonidentity channel on this fixed-boost Pauli line violates the exact
Markov-embedding inequality
\begin{equation}
 \lambda_z\geq\lambda_x\lambda_y
 \label{eq:pauli-embedding-inequality}
\end{equation}
for positive Pauli eigenvalues.  Hence the one-shot relativistic noise is not
the positive-time point of any time-homogeneous purely Pauli dissipative
semigroup, even though it is a perfectly valid random-unitary channel.
\end{theorem}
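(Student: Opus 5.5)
The plan is to diagonalize the Pauli Lindblad generator and compare the resulting exponential Bloch eigenvalues with those of $\Phi_\alpha$. Because each term $\sigma_j\rho\sigma_j-\rho$ is Pauli-diagonal, $\mathcal L$ acts on the Bloch vector as $\diag(-2(\gamma_y+\gamma_z),-2(\gamma_x+\gamma_z),-2(\gamma_x+\gamma_y))$. Indeed, conjugation by $\sigma_j$ fixes $\sigma_j$ and flips the other two Paulis, so $\sigma_j\sigma_k\sigma_j-\sigma_k=-2\sigma_k$ for $k\ne j$. Hence
\begin{equation*}
 e^{t\mathcal L}\;\longleftrightarrow\;
 \diag\bigl(e^{-2t(\gamma_y+\gamma_z)},e^{-2t(\gamma_x+\gamma_z)},e^{-2t(\gamma_x+\gamma_y)}\bigr)
 =:\diag(\lambda_x,\lambda_y,\lambda_z).
\end{equation*}
The eigenvalues satisfy
\begin{equation*}
 \frac{\lambda_z}{\lambda_x\lambda_y}=e^{4t\gamma_z}\ge1,
\end{equation*}
which gives the necessity of \cref{eq:pauli-embedding-inequality}. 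Conversely, any positive triple with $\lambda_x,\lambda_y,\lambda_z\le1$ satisfying all three cyclic inequalities of this form can be solved for nonnegative $t\gamma_j$ by taking logarithms. This establishes the equivalence asserted in the statement: for positive Pauli eigenvalues, embeddability is exactly the three cyclic inequalities.

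Next I would specialize to the Pauli line. By \cref{eq:pauli-line} its Bloch matrix is $\diag(1-\alpha,1-\alpha,1-2\alpha)$, and since $\alpha<1/2$ all entries are positive. Two issues need handling.

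The first is that matching Bloch matrices is the right criterion. Two unital qubit maps coincide iff their Bloch matrices coincide. The exponential is diagonal in the fixed Pauli basis, so it must match the diagonal entries in that order. There is no freedom to permute axes, since the generator is written in the fixed $\sigma_x,\sigma_y,\sigma_z$ basis.

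The second is the key inequality. The $z$-inequality requires $1-2\alpha\ge(1-\alpha)^2=1-2\alpha+\alpha^2$, i.e.\ $\alpha^2\le0$. This fails strictly for every $\alpha>0$, so no nonnegative rates and no $t>0$ exist. The same computation shows that the identity ($\alpha=0$) is the only member of the line satisfying \cref{eq:pauli-embedding-inequality}.

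The main point to check carefully is the generator-eigenvalue computation and the sign convention. In particular, I must confirm that the $\sigma_z$ eigenvalue is $e^{-2t(\gamma_x+\gamma_y)}$, so that the ratio isolates $\gamma_z$. Everything else is elementary. I would also remark that because $t\mathcal L$ enters only through the products $t\gamma_j$, the value of $t>0$ plays no role. I would further note that negative rates would be needed (namely $t\gamma_z=\tfrac14\log[(1-2\alpha)/(1-\alpha)^2]<0$), which explains why the channel, though random-unitary, is not Pauli-Markovian.
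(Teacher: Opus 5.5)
Your proposal is correct and follows the paper's proof essentially verbatim: you diagonalize the Pauli Lindbladian to get $\lambda_z/(\lambda_x\lambda_y)=e^{4t\gamma_z}\ge1$, then observe that $1-2\alpha<(1-\alpha)^2$ for every $\alpha>0$, which is equivalent to $t\gamma_z$ having to be negative. You also add a converse that the paper does not state: for positive eigenvalues the three cyclic inequalities suffice, and the other two hold automatically on this line. That converse is correct and makes the paper's ``equivalently'' precise, but the nonexistence claim does not need it.
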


\begin{proof}
A semigroup generated by \eqref{eq:pauli-lindblad-generator} has Bloch
eigenvalues
\begin{align*}
 \lambda_x&=e^{-2t(\gamma_y+\gamma_z)},
 &\lambda_y&=e^{-2t(\gamma_x+\gamma_z)},\\
 \lambda_z&=e^{-2t(\gamma_x+\gamma_y)}.
\end{align*}
Therefore $\lambda_z/(\lambda_x\lambda_y)=e^{4t\gamma_z}\geq1$, proving
\eqref{eq:pauli-embedding-inequality}.  For $\Phi_\alpha$,
$\lambda_x=\lambda_y=1-\alpha$ and $\lambda_z=1-2\alpha$, whereas
\[
 1-2\alpha<(1-\alpha)^2
\]
for every $\alpha>0$.  This contradicts
\eqref{eq:pauli-embedding-inequality}.  Equivalently, solving for the rates
would give
\[
 \gamma_z=\frac1{4t}
 \log\frac{1-2\alpha}{(1-\alpha)^2}<0.
\]
\end{proof}

\begin{remark}
The theorem excludes the natural Pauli-diagonal Markov embedding.  It is not
stated as a classification of embeddings into arbitrary nonnormal qubit
Lindbladians with Hamiltonian rotations.  The exact directed deficiency of
Theorem~\ref{thm:exact-deficiency} remains the stronger operational
comparison result used in the paper.
\end{remark}

\subsection{Robust exclusion radius}
\label{sec:robust-markov-exclusion}

The semigroup obstruction is stable under tomography and model error.

\begin{theorem}[Certified exclusion neighborhood for Pauli semigroups]
\label{thm:robust-pauli-semigroup-exclusion}
Let
\[
 \lambda(\alpha)=(1-\alpha,1-\alpha,1-2\alpha),
 \qquad 0<\alpha<\frac12,
\]
be the Bloch eigenvalue triple of a nontrivial fixed-boost channel.  Let
$\mu=(\mu_x,\mu_y,\mu_z)$ be any Pauli channel satisfying
$\|\mu-\lambda(\alpha)\|_\infty\leq\delta$.  If
\[
 \delta<\frac{\alpha^2}{3},
\]
then $\mu$ cannot be a positive-time element of a time-homogeneous Pauli-diagonal
Lindblad semigroup.
\end{theorem}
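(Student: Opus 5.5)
The plan is to reduce the statement to the embedding inequality \eqref{eq:pauli-embedding-inequality} from the proof of \cref{thm:pauli-semigroup-obstruction}. Any positive-time element of a semigroup generated by \eqref{eq:pauli-lindblad-generator} has Bloch eigenvalues
\[
 \mu_x=e^{-2t(\gamma_y+\gamma_z)},\quad
 \mu_y=e^{-2t(\gamma_x+\gamma_z)},\quad
 \mu_z=e^{-2t(\gamma_x+\gamma_y)} .
\]
From these formulas we get two necessary conditions. All three eigenvalues are strictly positive, and $\mu_z\ge\mu_x\mu_y$, because $\mu_z/(\mu_x\mu_y)=e^{4t\gamma_z}\ge1$. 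So it suffices to show that every Pauli triple $\mu$ with $\|\mu-\lambda(\alpha)\|_\infty\le\delta<\alpha^2/3$ satisfies the strict reverse inequality $\mu_z<\mu_x\mu_y$. Positivity of $\mu_z$ is not needed, since the failure of the inequality already excludes $\mu$.

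The first step is to control the transverse eigenvalues. Since $0<\alpha<1/2$, we have $\delta<\alpha^2/3<1/12$, so $\mu_x,\mu_y\ge 1-\alpha-\delta>1/2-1/12>0$. Because both factors are positive and bounded below by $1-\alpha-\delta$, this gives $\mu_x\mu_y\ge(1-\alpha-\delta)^2$. We also have $\mu_z\le 1-2\alpha+\delta$.

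The second step is a direct expansion:
\[
 \mu_x\mu_y-\mu_z\ \ge\ (1-\alpha-\delta)^2-(1-2\alpha+\delta)
 =\alpha^2-(3-2\alpha)\delta+\delta^2
 >\alpha^2-3\delta>0 .
\]
Here we use $\alpha>0$ and $\delta^2\ge0$, and the last inequality is exactly the hypothesis $\delta<\alpha^2/3$. This shows $\mu_z<\mu_x\mu_y$, which contradicts \eqref{eq:pauli-embedding-inequality}. Hence no rates $\gamma_j\ge0$ and no $t>0$ can produce $\mu$.

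The only delicate point is the sign bookkeeping in the first step. The bound $\mu_x\mu_y\ge(1-\alpha-\delta)^2$ requires both perturbed transverse eigenvalues to be positive, and the smallness of $\delta$ guarantees this. With that settled, the constant $3$ in the radius comes from $3-2\alpha<3$ together with dropping $\delta^2$. This gives a margin that is uniform in $\alpha$, though not sharp.
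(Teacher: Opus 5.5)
Your proof is correct and follows the paper's route: you reduce to the semigroup inequality $\mu_z\ge\mu_x\mu_y$ and show $\mu_x\mu_y-\mu_z\ge\alpha^2-3\delta>0$. The only variation is in bounding the product. You lower-bound $\mu_x\mu_y$ by $(1-\alpha-\delta)^2$ using positivity of the perturbed transverse eigenvalues, whereas the paper uses the two-sided estimate $|\mu_x\mu_y-\lambda_x\lambda_y|\le2\delta$ from $|\mu_y|,|\lambda_x|\le1$. Your version gives the slightly sharper margin $\alpha^2-(3-2\alpha)\delta+\delta^2$ and does not need the channel bound $|\mu_j|\le1$. (Your step $\alpha^2-(3-2\alpha)\delta+\delta^2>\alpha^2-3\delta$ becomes an equality at $\delta=0$, which is harmless.)
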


\begin{proof}
Every Pauli-semigroup element obeys the multiplicative inequality
$\mu_z\geq\mu_x\mu_y$.  At the fixed-boost point,
\[
 \lambda_x\lambda_y-\lambda_z=(1-\alpha)^2-(1-2\alpha)=\alpha^2.
\]
Since all Pauli eigenvalues lie in $[-1,1]$,
\[
 |\mu_x\mu_y-\lambda_x\lambda_y|
 \leq |\mu_x-\lambda_x||\mu_y|
      +|\lambda_x||\mu_y-\lambda_y|
 \leq2\delta.
\]
Together with $|\mu_z-\lambda_z|\leq\delta$ this gives
$\mu_x\mu_y-\mu_z\geq\alpha^2-3\delta>0$, contradicting the semigroup
inequality.
\end{proof}

\begin{corollary}[Tomographic certificate]
If simultaneous confidence intervals for the three Bloch eigenvalues have
half-width smaller than $\alpha^2/3$ around the fitted fixed-boost point, the
whole confidence box is excluded from the Pauli-Markovian semigroup region.
\end{corollary}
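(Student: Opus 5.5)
The corollary follows directly from \cref{thm:robust-pauli-semigroup-exclusion}. The plan is to say what the statistical hypothesis gives, and then apply the deterministic exclusion theorem to every point of the confidence box.

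First, let $\hat\lambda$ be the fitted point, assumed equal to the fixed-boost triple $\lambda(\alpha)=(1-\alpha,1-\alpha,1-2\alpha)$ for some $0<\alpha<1/2$. The simultaneous confidence region is the box
\[
 B=\{\mu:\|\mu-\lambda(\alpha)\|_\infty\le\delta\},
 \qquad \delta<\alpha^2/3,
\]
intersected with the physical Pauli-channel region. For every $\mu\in B$, the hypotheses of \cref{thm:robust-pauli-semigroup-exclusion} hold with the same $\delta$, because the $\ell_\infty$ distance to $\lambda(\alpha)$ is at most $\delta$. That theorem then excludes $\mu$ from the Pauli semigroup region. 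Since $\mu$ was arbitrary, the whole box is excluded.

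I would also record the uniform margin $\mu_x\mu_y-\mu_z\ge\alpha^2-3\delta>0$ on $B$. This shows that the certificate is an open condition, so a slightly enlarged box is still excluded.

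Two bookkeeping points need care. The first is the case where the box extends beyond the physical region, or contains points with nonpositive eigenvalues. The semigroup inequality $\mu_z\ge\mu_x\mu_y$ and the bound $|\mu_i|\le1$ used in the robustness proof only concern genuine Pauli channels. So I would note that any nonphysical points of the box are trivially not semigroup elements. In addition, semigroup eigenvalues $e^{-2t(\cdot)}$ are positive, so any $\mu$ with a nonpositive component is excluded anyway.

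The second point is the case where the fitted point is only near, rather than on, the fixed-boost line. There I would use the triangle inequality: replace $\delta$ by the half-width plus the distance from the fit to $\lambda(\alpha)$, and require this sum to be below $\alpha^2/3$. I expect no real obstacle here, since the argument is a pointwise application of the theorem.
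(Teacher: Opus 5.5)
Your proposal is correct and matches the paper's proof. Every point of the box lies within $\ell_\infty$ distance $\delta<\alpha^2/3$ of the fitted triple, so \cref{thm:robust-pauli-semigroup-exclusion} applies pointwise. Your extra bookkeeping (nonphysical or nonpositive points being trivially excluded, and the triangle-inequality extension to a fit lying only near the line) is sound but goes beyond what the paper's one-line argument records.
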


\begin{proof}
Every point of the confidence box lies within $\ell_\infty$ distance $\delta<\alpha^2/3$ of the fitted fixed-boost triple.  Applying \Cref{thm:robust-pauli-semigroup-exclusion} pointwise excludes the entire box.
\end{proof}

\section{Normalizable wave packets and open stability}

Sharp momentum branches are a convenient exact model but are not normalizable wave packets.  The no-go is not tied to this idealization.

\begin{proposition}[Wave-packet persistence]
\label{prop:wavepacket}
For every parameter triple $(\theta,u,v)$ in \cref{thm:nogo}, there exist pure, normalizable momentum states $|\psi_u\rangle$ and $|\psi_v\rangle$ with five disjoint compact momentum-space components such that the resulting source and target channels retain the strict all-state contraction \cref{eq:all-pair-contraction} and have a factor with a strictly negative Choi eigenvalue.  The same remains true under sufficiently small changes of packet shapes, branch weights, and momentum directions.
\end{proposition}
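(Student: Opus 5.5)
Both conclusions of \cref{thm:nogo} are strict inequalities in finitely many continuous quantities of the Bloch matrix. The plan is therefore a continuity argument: build packets whose reduced Bloch matrices converge to the ideal matrices $M(u)$ and $M(v)$ of \cref{eq:Mw}, then observe that strict inequalities survive small perturbations.

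\textbf{Construction.} Fix the five sharp momenta $p_0$ (identity branch, e.g. zero transverse momentum, for which $W(\Lambda,p_0)=I$) and $p_{\pm x},p_{\pm y}$ at common rapidity $\zeta$. For $\epsilon>0$ let $B_j^\epsilon$ be disjoint compact balls of radius $\epsilon$ around $p_j$ on the mass shell. Choose a fixed smooth bump profile $\chi_j^\epsilon$ supported in $B_j^\epsilon$ with $\int|\chi_j^\epsilon|^2d\nu_m=1$. Set
\[
 \psi_w=\sqrt{1-2w}\,\chi_0^\epsilon+\sqrt{w/2}\,\textstyle\sum_{\pm}\bigl(\chi_{\pm x}^\epsilon+\chi_{\pm y}^\epsilon\bigr).
\]
Disjoint supports make the components orthogonal, so $\psi_w$ is normalized. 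The induced law is $\mu_w=|\psi_w|^2d\nu_m$, and by \cref{eq:reduced-channel} the reduced map is a genuine CPTP channel. The Wigner rotation $p\mapsto R_W(\Lambda,p)$ is continuous, so uniformly on each ball $\|R_W(\Lambda,p)-R_W(\Lambda,p_j)\|\le\omega(\epsilon)$ with $\omega(\epsilon)\to0$. Hence $\|M^\epsilon(w)-M(w)\|\le\omega(\epsilon)$ for both $w=u,v$.

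\textbf{Persistence of the two strict properties.}
\begin{enumerate}[label=(\alph*)]
\item \emph{Contraction.} By the qubit identity used in \cref{thm:nogo}, \cref{eq:all-pair-contraction} is equivalent to the strict matrix inequality $G:=M_u^TM_u-M_v^TM_v\succ0$. Since the unit sphere is compact, this is the open condition $\lambda_{\min}(G)>0$. It holds at $\epsilon=0$ with margin $\min\{(1-su)^2-(1-sv)^2,(1-2su)^2-(1-2sv)^2\}>0$, and $G$ depends continuously on the Bloch matrices.
\item \emph{Failure of the factor.} The source matrix stays invertible for small $\epsilon$, so $T^\epsilon=M^\epsilon(v)M^\epsilon(u)^{-1}$ is continuous in $\epsilon$. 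The normalized Choi matrix of the unital map with Bloch matrix $T$ is affine in $T$, so its minimal eigenvalue is continuous. It therefore stays strictly negative, near the value in \cref{eq:negative-eig}. \cref{thm:factor} then excludes any CPTP converter.
\end{enumerate}
Fixing one sufficiently small $\epsilon$ gives the existence claim.

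\textbf{Open stability.} The map from the data to the Bloch matrix, $(\text{profiles},\text{weights},\text{directions})\mapsto M=\int R_W\,d\mu$, is continuous. For small changes in packet shapes, measured in $L^2$, we use $\bigl||\psi|^2-|\psi'|^2\bigr|_{L^1}\le2\|\psi-\psi'\|_2$ together with boundedness of $R_W$. Changes of branch weights and of directions (hence of the $p_j$) move $M$ by a small amount through the same uniform continuity of $R_W$ on a compact neighborhood. The two open conditions in (a) and (b) then persist. Note that the perturbed channels need not lie on the Pauli line, or even be diagonal. This is why the argument is phrased through $G\succ0$ and the Choi spectrum of $T$, not through diagonal formulas or the signed singular values.

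\textbf{Main obstacle.} The delicate step is the uniform continuity of the Wigner rotation on the mass shell near the chosen momenta, including the identity branch, which must be realized by momenta where $W$ is close to $I$. I would handle this with the explicit closed form of $W(\Lambda,p)$ for canonical boosts, which is smooth in $p$ away from singularities (none occur for massive particles). All remaining steps are finite-dimensional continuity of eigenvalues.
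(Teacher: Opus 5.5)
Your proposal is correct and follows essentially the same route as the paper: a superposition of five disjointly supported bump functions, whose reduced Bloch matrices converge to $M(u)$ and $M(v)$ by continuity of the Wigner rotation, followed by persistence of the two open conditions $\lambda_{\min}\bigl(M(u)^TM(u)-M(v)^TM(v)\bigr)>0$ and $\lambda_{\min}(\omega_T)<0$. Your explicit Gram margin and the $L^1$ bound on $|\psi|^2-|\psi'|^2$ for shape perturbations make quantitative what the paper states only through continuity.
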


\begin{proof}
Let $p_0,p_{\pm x},p_{\pm y}$ be the five ideal momenta, with $p_0$ chosen parallel to the boost so that its Wigner rotation is the identity.  Choose pairwise disjoint relatively compact neighborhoods $N_j$ and normalized functions $\varphi_j\in C_c^\infty(N_j)$.  For a weight $w\in(0,1/2)$ define
\begin{equation}
 \psi_w
 =\sqrt{1-2w}\,\varphi_0
 +\sqrt{\frac w2}\,
  (\varphi_{+x}+\varphi_{-x}+\varphi_{+y}+\varphi_{-y}).
 \label{eq:five-packet}
\end{equation}
Disjoint support gives $\|\psi_w\|_2=1$ and
\begin{equation}
 |\psi_w(p)|^2
 =(1-2w)|\varphi_0(p)|^2
 +\frac w2\sum_{\tau\in\{\pm x,\pm y\}}|\varphi_\tau(p)|^2.
\end{equation}
Thus the reduced spin channel has precisely the intended packet weights; no diagonal generalized momentum state is required.

For a massive particle and a fixed finite boost,
$p\mapsto R_W(\Lambda,p)$ is continuous on the positive-energy mass
shell.  The averaged Bloch matrices converge uniformly to $M(w)$ as
the diameters of all $N_j$ shrink.  Define the positive pairwise
contraction margin
\begin{equation}
 g=\lambda_{\min}\!\left(M(u)^TM(u)-M(v)^TM(v)\right)>0.
\end{equation}
The Choi margin $-\lambda_{\min}(\omega_T)$ in
\cref{eq:negative-eig} is also positive.  Matrix multiplication,
inversion on the open set of invertible matrices, and minimum
eigenvalues are continuous.  For sufficiently narrow packets, the
perturbed contraction matrix remains positive definite with margin at
least $g/2$, and the factor Choi state retains an eigenvalue below
$\lambda_{\min}(\omega_T)/2<0$.  The same strict inequalities persist
under all sufficiently small parameter and shape perturbations.
\end{proof}

Asymmetric packets need not preserve the diagonal form.  The proposition consequently yields nondiagonal, genuinely noncommuting examples within the same fixed-boost kinematics.

\begin{corollary}[Quantitative packet-width criterion]
\label{cor:packet-width}
Assume the Wigner rotation is $L$-Lipschitz in operator norm on the
union of the five packet neighborhoods and every neighborhood has
diameter at most $h$.  Then the packet-averaged Bloch matrices obey
\begin{equation}
 \|\widetilde M_w-M(w)\|_2\le\varepsilon:=Lh.
 \label{eq:packet-error}
\end{equation}
Let
\begin{equation}
 g=\lambda_{\min}\!\left(M(u)^TM(u)-M(v)^TM(v)\right).
\end{equation}
The strict all-state contraction persists whenever
\begin{equation}
 g>4\varepsilon+2\varepsilon^2.
 \label{eq:packet-contraction-condition}
\end{equation}
If, in addition, the factor-Choi perturbation bound obtained from
\cref{eq:factor-error,eq:choi-error} is smaller than
$-\lambda_{\min}(\omega_T)$, nondegradability is certified for the
finite-width packets.
\end{corollary}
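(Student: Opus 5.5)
The proof has two parts: the packet error bound \eqref{eq:packet-error}, and then the persistence of the contraction margin under that error. The factor-Choi clause is conditional, so it needs only a short remark.

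\emph{Step 1: the error bound.} Write the packet-averaged Bloch matrix as a convex combination over the five components,
\[
\widetilde M_w=\sum_j \pi_j\int R_W(\Lambda,p)\,|\varphi_j(p)|^2\,d\nu_m(p),
\]
where the branch weights $\pi_j$ are exactly those of \eqref{eq:five-packet}. Subtract the ideal matrix $M(w)=\sum_j\pi_j R_W(\Lambda,p_j)$ term by term. Each $\varphi_j$ is supported in $N_j$, which contains $p_j$ and has diameter at most $h$, so $\|R_W(\Lambda,p)-R_W(\Lambda,p_j)\|_2\le Lh$ on that support. Since $\|\cdot\|_2$ is convex and all weights are probabilities, integrating and summing gives $\|\widetilde M_w-M(w)\|_2\le Lh=\varepsilon$.

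\emph{Step 2: persistence of the contraction.} As in the proof of \cref{thm:nogo}, for $\rho-\sigma=(\bm x\cdot\bm\sigma)/2$ strict contraction for all $\bm x\ne0$ is equivalent to
\[
\widetilde M_u^T\widetilde M_u-\widetilde M_v^T\widetilde M_v\succ0 .
\]
Write $\widetilde M_w=M(w)+E_w$ with $\|E_w\|_2\le\varepsilon$. Then
\[
\widetilde M_w^T\widetilde M_w-M(w)^TM(w)=M(w)^TE_w+E_w^TM(w)+E_w^TE_w .
\]
Every $M(w)$ lies in $\conv(SO(3))$, so $\|M(w)\|_2\le1$, and this perturbation has norm at most $2\varepsilon+\varepsilon^2$. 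Weyl's inequality applied to the two perturbations, for $w=u$ and $w=v$, gives
\[
\lambda_{\min}\!\left(\widetilde M_u^T\widetilde M_u-\widetilde M_v^T\widetilde M_v\right)\ge g-2(2\varepsilon+\varepsilon^2)=g-4\varepsilon-2\varepsilon^2 .
\]
This is positive exactly under \eqref{eq:packet-contraction-condition}, so the strict all-state contraction persists.

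\emph{Step 3: the factor clause.} Only the conditional step needs an argument. The perturbed factor is $\widetilde T=\widetilde M_v\widetilde M_u^{-1}$, which is well defined because $\widetilde M_u$ stays invertible when $\varepsilon<\smin(M(u))$. The normalized Choi state depends linearly on $\widetilde T$, so Weyl's inequality again bounds the shift of its minimum eigenvalue by the Choi perturbation bound from \cref{eq:factor-error,eq:choi-error}. If that bound is below $-\lambda_{\min}(\omega_T)$, the perturbed minimum eigenvalue stays negative, and \cref{thm:factor} gives nondegradability.

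The only real care point is Step 2: the norms must be kept consistent, using the operator norm for Weyl's inequality, and the bound $\|M(w)\|_2\le1$ is what produces the constant $4\varepsilon+2\varepsilon^2$. Everything else is routine.
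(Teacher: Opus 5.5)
Your proof is correct and follows the paper's argument step for step: you integrate the pointwise Lipschitz bound over the five components, lose at most $4\varepsilon+2\varepsilon^2$ from $g$ via the Gram-perturbation bound $2\varepsilon+\varepsilon^2$ for contractions plus Weyl's inequality, and handle the Choi clause by Weyl stability combined with \cref{eq:factor-error,eq:choi-error}. Your extra details spell out what the paper leaves implicit: that $p_j\in N_j$ for the Lipschitz step, that $\|M(w)\|_2\le1$ because $M(w)\in\conv(SO(3))$, and that $\widetilde M_u$ is invertible when $\varepsilon<\smin(M(u))$.
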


\begin{proof}
The first bound follows by integrating the pointwise Lipschitz
estimate.  For any contraction $M$ and perturbation $E$ with
$\|E\|_2\le\varepsilon$,
\begin{equation}
 \|(M+E)^T(M+E)-M^TM\|_2
 \le2\varepsilon+\varepsilon^2.
\end{equation}
Applying this to source and target shows that the positive matrix
$M(u)^TM(u)-M(v)^TM(v)$ loses at most
$4\varepsilon+2\varepsilon^2$ in its smallest eigenvalue.  The Choi
statement is Weyl eigenvalue stability combined with
\cref{eq:factor-error,eq:choi-error}.
\end{proof}

\begin{theorem}[Closed perturbation radius for the fixed-boost no-go]
\label{thm:closed-perturbation-radius}
Put $\alpha=su$, $\beta=sv$, $0<\alpha<\beta<1/2$, and define
\begin{align}
 r&=1-2\alpha,\nonumber\\
 \eta&=\frac{\alpha(\beta-\alpha)}
 {2(1-\alpha)(1-2\alpha)},\label{eq:closed-radius-eta}\\
 g&=\min\bigl\{(\beta-\alpha)(2-\alpha-\beta),
 4(\beta-\alpha)(1-\alpha-\beta)\bigr\}.\label{eq:closed-radius-g}
\end{align}
Here $\eta=-\lambda_{\min}(\omega_T)$ and $g$ is the exact smallest
eigenvalue of
$M_\alpha^TM_\alpha-M_\beta^TM_\beta$.  Suppose perturbed source and target are unital qubit channels whose
Bloch matrices obey
\[
 \|\widetilde M_\alpha-M_\alpha\|_2\leq\varepsilon,
 \qquad
 \|\widetilde M_\beta-M_\beta\|_2\leq\varepsilon.
\]
Define
\begin{align}
 \varepsilon_{\rm CP}
 &=\frac{2\eta r^2}
 {\sqrt3(r+1)+2\eta r},\label{eq:closed-radius-cp}\\
 \varepsilon_{\rm con}
 &=\sqrt{1+\frac g2}-1.\label{eq:closed-radius-con}
\end{align}
If
\begin{equation}
 \varepsilon<\min\{\varepsilon_{\rm CP},\varepsilon_{\rm con}\},
 \label{eq:closed-radius-condition}
\end{equation}
then the perturbed target strictly contracts every nonzero Bloch difference
more than the perturbed source, while the unique factor
$\widetilde T=\widetilde M_\beta\widetilde M_\alpha^{-1}$ still has a
negative normalized Choi eigenvalue.  Thus
\eqref{eq:closed-radius-condition} is an explicit open ball of physical
counterexamples, with no unspecified continuity constant.
\end{theorem}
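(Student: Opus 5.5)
The plan is to prove the two conclusions separately and to reduce each to a perturbation estimate around the unperturbed Pauli-line channels. Those channels are known exactly from \cref{thm:nogo} and \cref{thm:exact-deficiency}: $M_\alpha$ and $M_\beta$ are diagonal with entries at most $1$ in absolute value, and $M_\alpha$ has smallest singular value $r=1-2\alpha>0$. Before either step I would check the two constants quoted in the statement.
\begin{itemize}[leftmargin=2.2em]
\item The eigenvalues of $M_\alpha^TM_\alpha-M_\beta^TM_\beta$ are $(1-\alpha)^2-(1-\beta)^2=(\beta-\alpha)(2-\alpha-\beta)$, with multiplicity two, and $(1-2\alpha)^2-(1-2\beta)^2=4(\beta-\alpha)(1-\alpha-\beta)$. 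This gives \cref{eq:closed-radius-g}.
\item $\eta=-\lambda_{\min}(\omega_T)$ is \cref{eq:negative-eig} rewritten with $\alpha=su$, $\beta=sv$.
\end{itemize}

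\emph{Contraction part.} Write $\widetilde M=M+E$ with $\|E\|_2\le\varepsilon$ and $\|M\|_2\le1$. Then
\begin{equation*}
\|\widetilde M^T\widetilde M-M^TM\|_2\le\|M^TE\|_2+\|E^TM\|_2+\|E^TE\|_2\le 2\varepsilon+\varepsilon^2 ,
\end{equation*}
exactly as in \cref{cor:packet-width}. Applying this to source and target and using Weyl's inequality, $\widetilde M_\alpha^T\widetilde M_\alpha-\widetilde M_\beta^T\widetilde M_\beta$ has smallest eigenvalue at least $g-4\varepsilon-2\varepsilon^2$. This is positive iff $(1+\varepsilon)^2<1+g/2$, i.e.\ iff $\varepsilon<\varepsilon_{\rm con}$. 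Positive definiteness gives $|\widetilde M_\beta\bm x|_2<|\widetilde M_\alpha\bm x|_2$ for every $\bm x\ne0$. By the trace-norm identity in the proof of \cref{thm:nogo}, this is the strict all-pair contraction.

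\emph{Choi part.} For a unital map with Bloch matrix $T$, the normalized Choi state is affine in $T$:
\begin{equation*}
\omega_T=\tfrac14\Bigl(I\otimes I+\sum_{ij}T_{ij}\,\sigma_i\otimes\sigma_j\Bigr),
\end{equation*}
up to a fixed local unitary. For a perturbation $\Delta=\widetilde T-T$, take the singular value decomposition $\Delta=\sum_k s_k\bm a_k\bm b_k^T$. Then $\sum_{ij}\Delta_{ij}\sigma_i\otimes\sigma_j=\sum_k s_k(\bm a_k\cdot\bm\sigma)\otimes(\bm b_k\cdot\bm\sigma)$, and each tensor factor has unit operator norm. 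Hence $\|\omega_{\widetilde T}-\omega_T\|\le\tfrac14\sum_k s_k\le\tfrac{\sqrt3}{4}\|\Delta\|_2$, since $\Delta$ has rank at most three.

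Next I bound $\|\Delta\|_2$. For $\varepsilon<r$, $\widetilde M_\alpha$ is invertible with $\|\widetilde M_\alpha^{-1}\|_2\le1/(r-\varepsilon)$. The resolvent identity gives $\|\widetilde M_\alpha^{-1}-M_\alpha^{-1}\|_2\le\varepsilon/[r(r-\varepsilon)]$. Splitting
\begin{equation*}
\Delta=(\widetilde M_\beta-M_\beta)\widetilde M_\alpha^{-1}+M_\beta(\widetilde M_\alpha^{-1}-M_\alpha^{-1})
\end{equation*}
yields $\|\Delta\|_2\le\varepsilon(r+1)/[r(r-\varepsilon)]$. By Weyl, $\lambda_{\min}(\omega_{\widetilde T})<0$ as soon as $\sqrt3\,\varepsilon(r+1)<4\eta r(r-\varepsilon)$, that is, $\varepsilon<4\eta r^2/(\sqrt3(r+1)+4\eta r)$.

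The stated $\varepsilon_{\rm CP}$ in \cref{eq:closed-radius-cp} carries a factor $2$ where this derivation gives $4$, so it is no larger than the derived threshold and the hypothesis suffices. The factor is unique by \cref{thm:factor}, so its negative Choi eigenvalue rules out any CPTP converter. It also remains to confirm that $\varepsilon_{\rm CP}<r$, which guarantees invertibility. This holds because $2\eta r^2<r\bigl(\sqrt3(r+1)+2\eta r\bigr)$.

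\emph{Main obstacle.} The contraction step is routine. The delicate point is the Choi perturbation constant: pinning down the exact normalization of $\omega_T$, so that its spectrum is \cref{eq:choi-eigs}, and confirming that the operator-norm-to-Choi bound has the factor $\sqrt3/4$. The inverse-perturbation bound must also stay in terms of $r$ alone, without assuming any structure of $\widetilde M$. If my normalization turned out to differ by a factor of $2$, the paper's more conservative radius would still cover it. I would verify this carefully before concluding.
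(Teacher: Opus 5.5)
\textbf{The overall structure matches the paper's proof, but one step is false.} Like the paper, you bound the contraction by a Gram-matrix perturbation plus Weyl. For the factor you use the resolvent bound $\|\widetilde T-T\|_2\le\varepsilon(r+1)/[r(r-\varepsilon)]$, then a Choi-perturbation bound, then Weyl. Your values of $g$ and $\eta$, the contraction threshold, the resolvent estimate and the check $\varepsilon_{\rm CP}<r$ are all correct.

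The false step is the estimate $\frac14\sum_k s_k\le\frac{\sqrt3}{4}\|\Delta\|_2$. Your $\|\cdot\|_2$ must be the spectral norm, since you use $\|M_\beta\|_2\le1$ and $\|\widetilde M_\alpha^{-1}\|_2\le1/(r-\varepsilon)$. For a $3\times3$ matrix, Cauchy--Schwarz gives $\sum_k s_k\le\sqrt3\|\Delta\|_F$, but only $\sum_k s_k\le3\|\Delta\|_2$. The latter is sharp here. Take $\Delta=\kappa\,\diag(1,-1,1)$. Then the Choi perturbation is $\frac14\sum_{ij}\Delta_{ij}\sigma_i\otimes\sigma_j^T=\frac{\kappa}{4}(\sigma_x\otimes\sigma_x+\sigma_y\otimes\sigma_y+\sigma_z\otimes\sigma_z)$. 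It has eigenvalue $-3\kappa/4$ on the singlet, while $\|\Delta\|_2=\kappa$. So the correct constant from your SVD argument is $3/4$, not $\sqrt3/4$. Your derived threshold should therefore read $4\eta r^2/(3(r+1)+4\eta r)$.

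This also breaks your comparison ``factor $2$ versus $4$''. The hypothesis still suffices, but for a different reason. Write $\varepsilon_{\rm CP}=4\eta r^2/(2\sqrt3(r+1)+4\eta r)$; since $2\sqrt3>3$, it is smaller than the corrected threshold.

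The paper gets its constant directly from the Hilbert--Schmidt bound:
$\|\omega_{\widetilde T}-\omega_T\|_\infty\le\|\omega_{\widetilde T}-\omega_T\|_{\rm HS}=\frac12\|\Delta\|_F\le\frac{\sqrt3}{2}\|\Delta\|_2$.
This is valid, and slightly weaker than the sharp $3/4$. So the discrepancy you anticipated is not a normalization issue but a Frobenius/spectral mix-up. It is harmless only because $\frac34<\frac{\sqrt3}{2}$. With that one line repaired, your proof is complete.
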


\begin{proof}
The least singular value of $M_\alpha$ is $r$.  Whenever
$\varepsilon<r$, the resolvent identity and $\|M_\beta\|_2\leq1$ give
\begin{equation}
 \|\widetilde T-T\|_2
 \leq\frac{\varepsilon(r+1)}{r(r-\varepsilon)}.
 \label{eq:closed-radius-factor-bound}
\end{equation}
The normalized Choi perturbation is at most $\sqrt3/2$ times this quantity,
by the Pauli-basis estimate \eqref{eq:choi-error}.  Requiring it to be less
than $\eta$ is algebraically equivalent to
$\varepsilon<\varepsilon_{\rm CP}$; this inequality itself implies
$\varepsilon<r$.

For contraction, each Gram matrix changes by at most
$2\varepsilon+\varepsilon^2$.  Hence the source-minus-target Gram difference
retains a positive least eigenvalue whenever
$g>4\varepsilon+2\varepsilon^2$, which is exactly
$\varepsilon<\varepsilon_{\rm con}$.  The two strict conditions together
prove the theorem.
\end{proof}

\section{Robust tomography certificate}

Suppose tomography gives $\widehat M_A,\widehat M_B$ with
\begin{equation}
 \|\widehat M_A-M_A\|_2\le\varepsilon_A,
 \qquad
 \|\widehat M_B-M_B\|_2\le\varepsilon_B.
\end{equation}
Let $\widehat s=\smin(\widehat M_A)$ and assume $\widehat s>\varepsilon_A$.  Then $M_A$ and $\widehat M_A$ are invertible.  For $T=M_BM_A^{-1}$ and $\widehat T=\widehat M_B\widehat M_A^{-1}$, trace-distance contraction of the physical channel $B$ gives $\|M_B\|_2\le1$, and the resolvent identity yields the fully data-dependent bound
\begin{equation}
 \|\widehat T-T\|_2
 \le
 \Delta_T:=
 \frac{\varepsilon_B}{\widehat s}
 +\frac{\varepsilon_A}{\widehat s(\widehat s-\varepsilon_A)}.
 \label{eq:factor-error}
\end{equation}
Indeed,
\begin{equation}
 \widehat T-T=(\widehat M_B-M_B)\widehat M_A^{-1}
 +M_B\widehat M_A^{-1}(M_A-\widehat M_A)M_A^{-1},
\end{equation}
and $\|M_A^{-1}\|_2\le(\widehat s-\varepsilon_A)^{-1}$.

The normalized Choi state of a unital qubit map is
\begin{equation}
 \omega_T=\frac14\left(I\otimes I+
 \sum_{i,j=1}^3T_{ij}\sigma_i\otimes\sigma_j^T\right).
\end{equation}
Using $\|X\|_F\le\sqrt3\|X\|_2$ and the Hilbert--Schmidt orthogonality of Pauli products gives the conservative bound
\begin{equation}
 \|\omega_{\widehat T}-\omega_T\|_\infty
 \le\frac{\sqrt3}{2}\|\widehat T-T\|_2.
 \label{eq:choi-error}
\end{equation}

\begin{corollary}[Finite-error rejection rule]
\label{cor:rejection}
If
\begin{equation}
 \lambda_{\min}(\omega_{\widehat T})
 +\frac{\sqrt3}{2}\Delta_T<0,
 \label{eq:reject}
\end{equation}
where $\Delta_T$ is the certified bound in \cref{eq:factor-error}, then no CPTP post-processing exists.  Equivalently, one may propagate a confidence region for $(M_A,M_B)$ and maximize the minimum factor-Choi eigenvalue over that region.
\end{corollary}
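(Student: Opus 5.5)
The proof has two parts. First we bound $\lambda_{\min}$ of the true factor Choi state from above. Then we invoke \cref{thm:factor}.

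Since $\widehat s>\varepsilon_A$, the bound $\|\widehat M_A-M_A\|_2\le\varepsilon_A$ together with Weyl's inequality for singular values gives
\[
\smin(M_A)\ge\widehat s-\varepsilon_A>0 .
\]
So $M_A$ is invertible, and the true factor $T=M_BM_A^{-1}$ of \cref{thm:factor} is well defined and unique. The resolvent identity displayed before \cref{eq:factor-error} uses $\|M_B\|_2\le1$, which holds because $\Phi_B$ is a physical unital channel. It also uses $\|\widehat M_A^{-1}\|_2\le1/\widehat s$ and $\|M_A^{-1}\|_2\le(\widehat s-\varepsilon_A)^{-1}$. Together these give $\|\widehat T-T\|_2\le\Delta_T$.

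Next, the map $T\mapsto\omega_T$ is affine, so the difference $\omega_{\widehat T}-\omega_T$ equals $\frac14\sum_{ij}(\widehat T-T)_{ij}\sigma_i\otimes\sigma_j^T$. Applying \cref{eq:choi-error} then gives
\[
\|\omega_{\widehat T}-\omega_T\|_\infty\le\tfrac{\sqrt3}{2}\Delta_T .
\]
Both Choi operators are Hermitian. By Weyl's eigenvalue perturbation inequality,
\[
\lambda_{\min}(\omega_T)\le\lambda_{\min}(\omega_{\widehat T})+\|\omega_T-\omega_{\widehat T}\|_\infty\le\lambda_{\min}(\omega_{\widehat T})+\tfrac{\sqrt3}{2}\Delta_T<0,
\]
where the last inequality is hypothesis \cref{eq:reject}.

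The unital map with Bloch matrix $T$ therefore has a Choi state with a negative eigenvalue, so it is not completely positive. By \cref{thm:factor}, this unital map is the only possible linear factor, so no CPTP $\Lambda$ satisfies $\Phi_B=\Lambda\Phi_A$.

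For the "equivalently" clause, we argue pointwise. Every admissible pair $(M_A,M_B)$ in the confidence region has a factor whose minimum Choi eigenvalue is at most the supremum over the region. If that supremum is negative, the same application of \cref{thm:factor} rules out a CPTP factor for every pair in the region, including the true one. The ball bound \cref{eq:reject} is just one certified upper bound on this supremum.

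The only delicate point is the invertibility of $M_A$. It is required before the factor is even defined and before \cref{thm:factor} can be invoked, and it follows from $\widehat s>\varepsilon_A$ through singular-value stability. Once invertibility is in hand, the remaining steps are direct chaining of the estimates already established.
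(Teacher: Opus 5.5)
Your proof is correct and follows the paper's route. The resolvent bound $\Delta_T$ and the Pauli-basis estimate \eqref{eq:choi-error}, combined with Weyl's inequality, give $\lambda_{\min}(\omega_T)\le\lambda_{\min}(\omega_{\widehat T})+\frac{\sqrt3}{2}\Delta_T<0$, and uniqueness of the factor in \cref{thm:factor} then rules out every CPTP post-processing; your justification that $M_A$ is invertible because $\widehat s>\varepsilon_A$, and your pointwise argument for the confidence-region clause, supply details the paper leaves implicit.
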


\begin{proof}
Weyl's eigenvalue inequality and \eqref{eq:choi-error} give
\begin{equation*}
 \lambda_{\min}(\omega_T)\leq\lambda_{\min}(\omega_{\widehat T})
 +\frac{\sqrt3}{2}\Delta_T.
\end{equation*}
Under \eqref{eq:reject}, the true factor Choi state has a negative eigenvalue, so the unique factor cannot be completely positive.
\end{proof}

\begin{corollary}[Finite-shot confidence certificate]
\label{cor:finite-shot}
For each channel, prepare the three positive Pauli eigenstates and
estimate the three output Pauli expectations with $N$ independent
binary shots per matrix entry.  At confidence at least $1-\gamma$, one
may use
\begin{equation}
 \varepsilon_A=\varepsilon_B
 =3\sqrt{\frac{2}{N}\log\frac{36}{\gamma}}
 \label{eq:shot-error}
\end{equation}
in \cref{eq:factor-error,eq:reject}.  In particular, a prescribed
spectral-norm radius $\varepsilon$ is certified whenever
\begin{equation}
 N\ge\frac{18}{\varepsilon^2}\log\frac{36}{\gamma}
 \label{eq:shot-complexity}
\end{equation}
shots are taken for each setting (nine settings per channel).
\end{corollary}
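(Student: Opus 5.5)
The plan is to estimate every entry of $M_A$ and $M_B$ from independent binary shots, control all nine entries at once with Hoeffding's inequality and a union bound, and then pass from entrywise error to spectral-norm error.

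\emph{Estimating an entry.} For a unital channel, the Bloch-matrix entry $M_{ij}$ equals the expectation $\Tr[\sigma_i\Phi(\rho_j)]$, where $\rho_j$ is the positive eigenstate of $\sigma_j$. This holds because $\rho_j=(I+\sigma_j)/2$ is mapped to $(I+M\hat e_j\cdot\bm\sigma)/2$, and unitality kills any translation term. Hence each entry is the mean of a $\pm1$ random variable obtained by measuring $\sigma_i$ on the output. Taking $N$ independent shots and forming the empirical mean $\widehat M_{ij}$, Hoeffding's inequality for variables in $[-1,1]$ gives
\[
 \Pr\bigl(|\widehat M_{ij}-M_{ij}|\ge t\bigr)\le 2e^{-Nt^2/2}.
\]

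\emph{Union bound.} There are $9$ entries per channel and two channels, so $18$ events in total. Requiring the total failure probability to be at most $\gamma$ means $18\cdot 2e^{-Nt^2/2}\le\gamma$. Solving gives
\[
 t=\sqrt{\frac{2}{N}\log\frac{36}{\gamma}}.
\]
With probability at least $1-\gamma$, every entry of both estimates is then within $t$ of its true value.

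\emph{Entrywise to spectral norm.} For a $3\times3$ error matrix $E$ with $|E_{ij}|\le t$ for all entries,
\[
 \|E\|_2\le\|E\|_F\le 3t.
\]
The cruder route $\|E\|_2\le\sqrt{\|E\|_1\|E\|_\infty}\le 3t$ gives the same bound. This yields \cref{eq:shot-error}, namely $\varepsilon_A=\varepsilon_B=3t$.

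\emph{Certificate and shot count.} On the success event, the hypotheses of \cref{eq:factor-error} hold, so \cref{cor:rejection} applies with confidence $1-\gamma$. The sample-size statement follows by inverting $3\sqrt{(2/N)\log(36/\gamma)}\le\varepsilon$, which gives $N\ge 18\varepsilon^{-2}\log(36/\gamma)$. There are nine settings per channel, one for each pair consisting of a preparation $j$ and a measurement axis $i$.

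\emph{Main obstacle.} The only delicate point is the bookkeeping of constants: the Hoeffding range $[-1,1]$ produces the factor $2/N$, the union over $36$ one-sided events produces $\log(36/\gamma)$, and the norm conversion produces the factor $3$. One must also note that the shots are independent across entries, although the union bound does not actually need independence.
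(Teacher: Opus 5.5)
Your proposal is correct and follows the paper's proof essentially step for step: you identify $M_{ij}$ as a $\pm1$-valued output expectation, apply Hoeffding with a union bound over the $18$ entries, use $\|E\|_2\le\|E\|_F\le 3t$, and invert for $N$. One small precision: the success event guarantees only the norm hypotheses of \cref{eq:factor-error}, so the condition $\widehat s>\varepsilon_A$ is still a data-dependent check that must be verified before invoking \cref{cor:rejection}.
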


\begin{proof}
For a unital channel, the expectation of $\sigma_i$ on the output
generated by the positive $\sigma_j$ eigenstate equals $M_{ij}$.  Each
sample is in $\{-1,+1\}$.  Hoeffding's inequality and a union bound over
the $18$ entries give
\begin{equation}
 \max_{C\in\{A,B\}}\max_{i,j}
 |\widehat M_{C,ij}-M_{C,ij}|
 \le \sqrt{\frac{2}{N}\log\frac{36}{\gamma}}
\end{equation}
with probability at least $1-\gamma$.  A $3\times3$ matrix with every
entry bounded by $a$ has spectral norm at most its Frobenius norm,
which is at most $3a$.  Substitution proves both claims.
\end{proof}

Only three linearly independent spin preparations and three Pauli measurement axes are needed for unital process tomography.  The relevant conclusion is not a comparison of separate decay fits but a one-sided confidence statement about Choi positivity.

\section{Physical meaning and limitations}

The target channel in \cref{thm:nogo} is noisier according to every
principal polarization contrast, every singular value, and every
binary spin-state trace distance.  Yet it is not a post-processing of
the source.  The missing condition is joint consistency on entangled
inputs.  In the factor, the longitudinal contraction is too strong
relative to the two transverse contractions to fit inside the CPTP
tetrahedron.  The exact deficiency
\cref{eq:exact-deficiency} shows that this is not an artifact of
choosing the algebraic inverse: no alternative physical converter can
remove the residual gap.

The model keeps the observer boost fixed and modifies only a calibrated momentum distribution.  It therefore represents a more restrictive relativistic setting than a classical mixture of different boosts.  Nevertheless it still assumes a product spin--momentum preparation.  Outside that compatibility domain, reduced spin dynamics can depend on the assignment of correlated initial states and should be treated at the joint level.

The exact cone assumes equal transverse momentum magnitude and inversion symmetry.  General wave packets are covered locally by \cref{prop:wavepacket}, not globally classified.  Higher spin replaces the qubit tetrahedron with a larger Choi cone and requires a separate analysis.

\section{Conclusion}

A single fixed Lorentz boost, combined with different calibrated momentum preparations, can generate reduced spin channels whose observable contractions are totally ordered while their information structures are not.  The inversion-symmetric transverse shell has an
exact matrix cone.  Inside its analytic Pauli line, the identity
dominates every member, but all positive-noise members form a
Blackwell antichain.  We obtain the exact directed diamond deficiency
and its optimal physical converter, not only a non-CP inverse witness.
The effect survives pure normalizable wave packets and nondiagonal perturbations, with explicit stability and tomography radii.  Consequently, even the simultaneous decay of all
binary spin distinguishabilities is not a complete monotone for
fixed-boost relativistic spin noise.

\appendix

\section{Trace-class wave packets and the random-unitary reduction}
\label{app:wavepacket-reduction}

We record the momentum trace carefully because generalized momentum
eigenkets can obscure the state-space domain.  Let
$|\chi\rangle=\sum_\sigma c_\sigma|\sigma\rangle$ and
\begin{equation}
 |\Psi\rangle
 =\int_{\mathcal H_m^+}d\nu_m(p)\,
 \psi(p)|p\rangle\otimes|\chi\rangle.
\end{equation}
Using covariantly normalized momentum kets, the transformed vector is
\begin{equation}
 U(\Lambda)|\Psi\rangle
 =\int d\nu_m(p)\,\psi(p)|\Lambda p\rangle
 \otimes D(W(\Lambda,p))|\chi\rangle.
\end{equation}
For a bounded spin observable $A$,
\begin{multline}
 \langle U(\Lambda)\Psi|I\otimes A|U(\Lambda)\Psi\rangle\\
 =\int d\nu_m(p)\,|\psi(p)|^2
 \langle\chi|D(W(\Lambda,p))^\dagger
 A D(W(\Lambda,p))|\chi\rangle.
 \label{eq:weak-partial-trace}
\end{multline}
The cross terms vanish because the Lorentz transformation is
one-to-one on the mass shell and the invariant measure absorbs the
Jacobian.  Since \cref{eq:weak-partial-trace} holds for every bounded
$A$, it identifies the reduced density operator and proves
\cref{eq:reduced-channel} for pure packets.  A positive trace-class
$\rho_P$ has a spectral decomposition
$\rho_P=\sum_\ell\lambda_\ell|\psi_\ell\rangle\langle\psi_\ell|$;
linearity and monotone convergence give the mixed-state formula with
\begin{equation}
 \mu(dp)=\sum_\ell\lambda_\ell|\psi_\ell(p)|^2d\nu_m(p).
\end{equation}
In particular, the construction \cref{eq:five-packet} is an ordinary
rank-one state even though its reduced channel is a convex mixture of
five narrow momentum components.

\section{Derivation of the Wigner-angle parametrization}

For two perpendicular canonical boosts with rapidities $\xi$ and $\zeta$, the standard half-angle formula is \cref{eq:wigner-angle}.  Given a fixed observer rapidity $\xi$, every angle satisfying
\begin{equation}
 \tan(\theta/2)<\tanh(\xi/2)
\end{equation}
is realized by the finite particle rapidity
\begin{equation}
 \zeta=2\operatorname{artanh}
 \left(\frac{\tan(\theta/2)}{\tanh(\xi/2)}\right).
\end{equation}
Changing the transverse momentum direction rotates the Wigner axis in the transverse plane; reversing momentum reverses that axis.

\section{Reproducibility}

The supplied script \texttt{verify\_fixed\_boost\_wigner.py} checks the
cone parametrization, the exact Choi spectrum, the finite-rapidity
example, and $1000$ random parameter triples.  Independently of the
analytic proof, it solves the four-point post-processing linear program
for $1000$ random source--target pairs and compares every optimum with
\cref{eq:exact-deficiency}.  It also regenerates all three figures.
These calculations test implementation and normalization; the theorem
signs and the linear-program optimum are analytic.

\bibliography{references}
\end{document}